\documentclass[conference]{IEEEtran}
\IEEEoverridecommandlockouts
\usepackage{cite}
\usepackage{amsmath,amssymb,amsfonts}
\usepackage{algorithmic}
\usepackage{graphicx}
\usepackage{textcomp}
\usepackage{xcolor}
\def\BibTeX{{\rm B\kern-.05em{\sc i\kern-.025em b}\kern-.08em
    T\kern-.1667em\lower.7ex\hbox{E}\kern-.125emX}}
    
\usepackage[vlined,commentsnumbered,ruled,linesnumbered]{algorithm2e}
\let\oldnl\nl
\newcommand{\nonl}{\renewcommand{\nl}{\let\nl\oldnl}}
\usepackage[skip=0pt]{caption}
\usepackage{subcaption}
\usepackage{xcolor}
\usepackage{amsmath}
\usepackage{enumitem}
\usepackage{url}

\DeclareMathOperator*{\argmin}{arg\,min}

\newtheorem{theorem}{Theorem}[section]
\newtheorem{lemma}[theorem]{Lemma}
\newtheorem{proposition}[theorem]{Proposition}
\newtheorem{corollary}[theorem]{Corollary}

\newtheorem{example}[theorem]{Example}

\newtheorem{definition}[theorem]{Definition}
\newtheorem{proof}[theorem]{Proof}

\newcommand{\yuval}[1]{}

\usepackage{paralist}

\usepackage{multirow}

\begin{document}

\title{Patterns Count-Based Labels for Datasets}

\author{\IEEEauthorblockN{Yuval Moskovitch}
\IEEEauthorblockA{
\textit{University of Michigan}\\
yuvalm@umich.edu}
\and
\IEEEauthorblockN{H. V. Jagadish}
\IEEEauthorblockA{
\textit{University of Michigan}\\
jag@umich.edu}
}

\maketitle

\begin{abstract}
Counts of attribute-value combinations are central to the profiling of a dataset, particularly in determining fitness for use and in eliminating bias and unfairness. While counts of individual attribute values may be stored in some dataset profiles, there are too many combinations of attributes for it to be practical to store counts for each combination.  In this paper, we develop the notion of storing a ``label" of limited size that can be used to obtain good estimates for these counts.
	A label, in this paper, contains information regarding the count of selected patterns--attributes values combinations--in the data. We define an estimation function, that uses this label to estimate the count of every pattern.
	We present the problem of finding the optimal label given a bound on its size and propose a heuristic algorithm for generating optimal labels. We experimentally show the accuracy of count estimates derived from the resulting labels  and the efficiency of our algorithm. 
\end{abstract}



\section{Introduction}

Data-driven decision systems are increasingly used today. The data on which these systems depend, as in much of data science, are often ``found data", namely, data that was not collected as part of the development of the analytics pipeline, but was rather acquired independently, possibly assembled by others for different purposes. 
When the decision is made by a machine-learned model, the correctness and quality of the decision depend centrally on the data used in the model training phase.


The use of improper, unrepresentative, or biased data may lead to unfair decisions, algorithmic discrimination (such as racism), and biased models \cite{abs-1909-01866}. 
Data-driven methods are increasingly being used in domains such as fraud and risk detection, where data-driven algorithmic decision making may affect human life. For instance, risk assessment tools, which predict the likelihood of a defendant to re-offend, are widely used in courtrooms across the US~\cite{machineBias}. ProPublica, an independent, non-profit newsroom that produces investigative journalism in the public interest, conducted a study on the risk assessment scores output by a software developed by Northpointe, Inc. They found that the software discriminated based on race: blacks were scored at greater risk of re-offending than the actual, while whites were scores at lower risk than actual.

Further analysis \cite{AsudehJJ19} showed issues with other groups as well. For example, the error rate for Hispanic women is very high because there aren't many Hispanic women in the data set. It is not only that there are fewer Hispanics than blacks and whites, and fewer women then men, but also fewer Hispanic women than one would expect if these attribute values were independently distributed. A judge sentencing a Hispanic woman presumably would like to be informed about this low count of Hispanic women in the data set and the consequent likelihood of greater error in the risk assessment.
  

When using ``found data", analysts typically perform data profiling, a process of extracting metadata or other informative summaries of the data \cite{AbedjanGN15}. Examples of information acquired in this process include statistics over the attributes' values, their type, common patterns, and attributes correlations and dependencies. Such information may assist in mitigating the misuse of data and reduce algorithmic bias and racism. While informative and useful, data profiling is hard to do well, is usually not automated, and requires significant effort.

Even users of the data (or data analysis), and not just the analysts, may be interested in this sort of profiling information on the training data before they can trust the learned model.  To help both the data analyst and the data user, the notion of a ``nutrition label" has been suggested \cite{StoyanovichH19, DNL, datasheets, MithraLabel, modelCards,NLforRanking}.
The basic idea of a nutrition label is to capture, in a succinct label, data set properties of interest.  
Perhaps the single most important such property is a profile of the counts of various attribute value combinations.
For instance, an analyst may wish to ensure a (close) to real-world distribution in the attribute's values of the data, such as an equal number of males and females. Another concern may be the lack of adequate representation in the data for a particular group \cite{AsudehJJ19}, such as divorced African-American females, or contrarily, a high percentage of data that represents the same group (data skew) \cite{ChenJS18}. The count information may also reveal potential dependent or correlated attributes. As a simple example, if all tuples representing individuals under 20 years old are also single, this may point out a possible connection between age and marital status.

Of course, interpretation of the count information depends on the intended use of the data set. Users performing different tasks may be interested in various parts of the data and their counts. Moreover, the thresholds set for skew or inadequate data may vary for different uses. Once the count information is available, it can be used to develop usecase-specific metadata warnings such as ``dangerous intersected attribute combinations" or ``inadequate representation of a protected group".

 In this paper, we propose to label datasets with information regarding the count of different patterns (attributes values combinations) in the data,  which can be useful to determine fitness for use.
Needless to say, there is a combinatorial number of such combinations possible.  So, storing individual counts for each is likely to be impossible.  
To this end, we focus on techniques to estimate these counts based on storing only a limited amount of information.
  

 \begin{example}\label{ex:intro}
 COMPAS is a risk assessment commercial tool made by Northpointe, Inc.
The COMPAS dataset was collected and published by ProPublica.
 as part of their investigation~\cite{compas}. 
 The full dataset contains  60,843 tuples with 29 attributes, including meaningful demographic groups such as gender, race, age, marital status, assessment reason, agency (e.g., pretrial, probation), language, legal stats, custody status, and supervision level. 
Four of these attributes are shown in a fragment of a simplified version of the dataset in Figure~\ref{fig:db}.  
    Partial counts information of the simplified version is given in Figure \ref{fig:label}.
  This dataset description depicts the possible values of each attribute, and their count in the data, with the addition of counts for some attribute value combinations: gender and race in this example. Some immediate observations that can be made based on this information is that female and male are not equally represented in the data, and due to the low number of widows in the data, there is a high possibility that the number of Hispanic female widows is inadequate for the development of non-biased algorithm using this data.   
 \end{example}

\setlength{\belowcaptionskip}{-20pt}
 \begin{figure}
	\centering
	\includegraphics[scale=0.5]{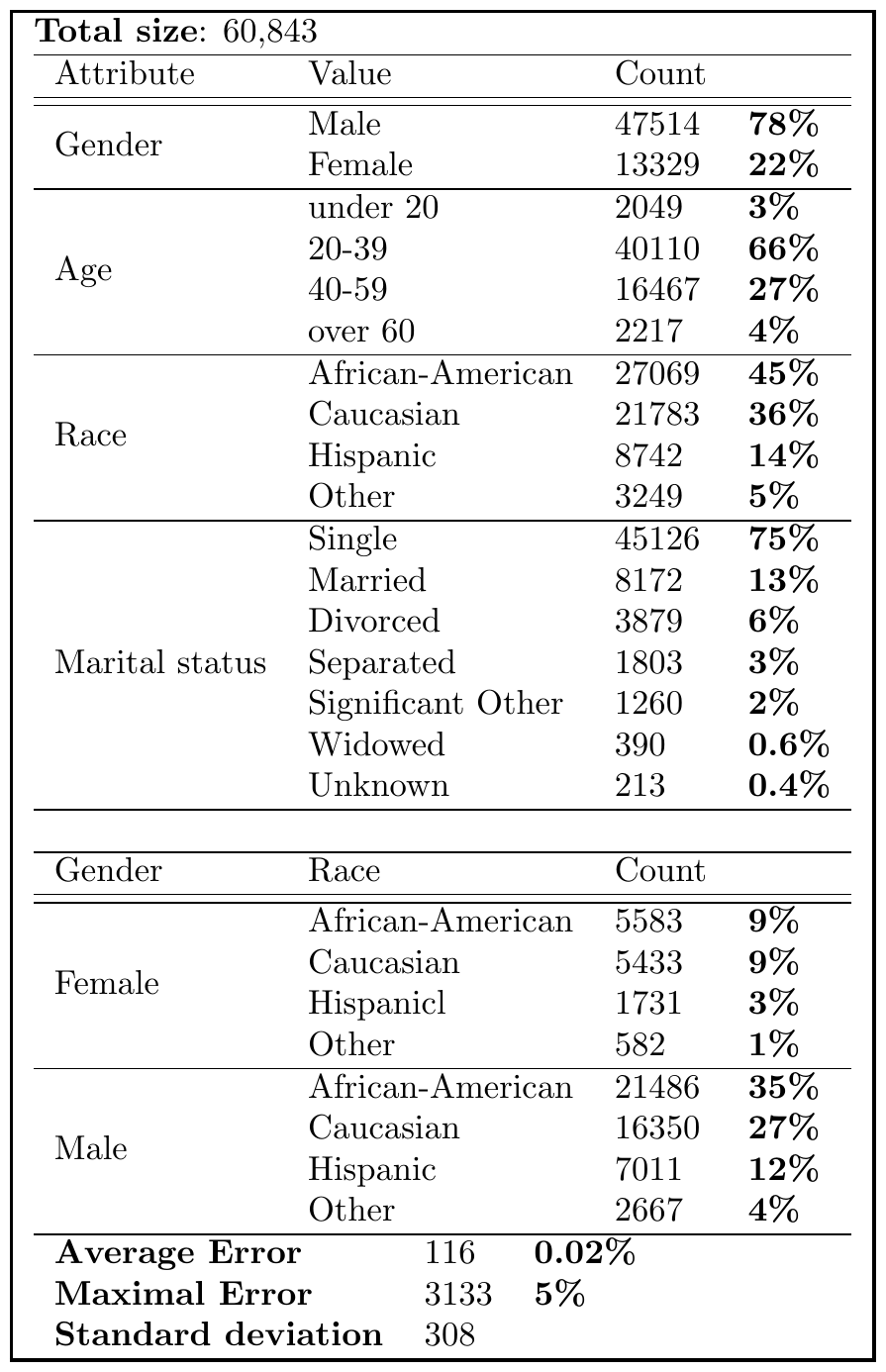}
	\caption{Labels computed for (a simplified version of the) COMPAS dataset}
	\label{fig:label}
\end{figure}

Given a data set, if we do not know anything about value distributions in it, a common assumption to make is that of independence between attributes.  One way we could control the size of stored information is to keep counts for only individual attribute values, and to estimate counts for attribute value combinations, assuming independence. However, this defeats the central purpose of profiling -- we only get information about individual attributes (the ``marginal distributions") but nothing about any correlations. 
In the study of discrimination, there is a considerable examination of {\em intersectionality}, the whole point of which is to understand how the social consequence of being a member of a protected class on multiple axes is not simply the ``sum" of each alone. For example, to understand the discrimination faced by black women it is not enough to understand independently the impact of race alone and gender alone. In other words, we have to ensure that our estimates for the count of any pattern in the database are at least approximately correct.

Histograms have long been used for similar purposes in relational databases,  however, they do not do very well in high dimensions. Other prevalent techniques for selectivity estimation includes sampling, and machine learning-based methods (see review in Section~\ref{sec:related}). The former suffers from insufficient performance in the presence of skews and high selectivity queries, and the latter requires training and result in very complex models. Inspired by the concept of nutrition labels for datasets, a key requirement in our problem context is that the metadata annotation can be immediately comprehensible to a potential user of the dataset.



Our problem, intuitively, is to  choose a small number of patterns (limited by a given space budget), among the exponential number, that can be used to estimate the count for any pattern with minimal error. 
We envisage this information being made available as meta-data with each data set. 
In deference to the idea of a nutrition label, we call our stored information a {\em ``label"}.
An important feature of our model that is missing in previously  proposed models for data labeling is the ability to generate the labels in a fully automated manner. 



We define our notion of data labels with respect to a subset of attributes $S$, as the count information of all possible values combination of attributes in $S$ appearing in the data. The size of the label is then determined by the space required for the count information. 
By making an independence assumption, individual attribute value counts can be used to estimate the joint distribution, but if we are additionally given selected intersection counts, how should we use these to estimate other intersection counts not provided? 
We present a model for this estimation in Section \ref{sec:model}.  Given the estimation procedure, each label entails an error with respect to the real count of patterns in the data. 
The problem of finding an optimal label within a given bound on the label size is NP-hard.




A naive algorithm for the problem would traverse over all possible attributes subsets in increasing size order, compute the size of the corresponding label for each set, and choose the one that entails the minimal error within the space budget. We argue that in practice, the labels generated with a set of attributes $S$ is preferable over labels generated using any subset of $S$, and build upon this property an optimized heuristic for the problem of finding an optimal label (Section~\ref{sec:algo}).


We conduct an extensive experimental study (Section \ref{sec:exp}) to assess the quality of our proposed labels model and the labels generation algorithm's performance using real-world datasets. Our experimental results demonstrate the high accuracy of the labels generated, even with a very limited space budget, and indicate the usefulness of our proposed optimized heuristic compared to the naive algorithm. They further show the scalability of the algorithm with respect to the generated label size, the data size, and the number of attributes.

We survey related work in Section \ref{sec:related} and conclude in Section \ref{sec:conc}.  

\section{Labels and Pattern Count Estimation}
\label{sec:model}

\setlength{\belowcaptionskip}{-15pt}
\begin{table}
    \centering
    \scriptsize
    \begin{tabular}{ll}
        \hline
        $D$ & Dataset \\
         $\mathcal{A}$&  Attributes set in $D$\\ 
         $Dom(A_i)$ & Active domain of attribute $A_i$ \\
         $p$ & Pattern \\
         $Attr(p)$ & The set of attributes in the pattern $p$ \\
         $c_D(p)$ & The count of tuples in $D$ satisfying $p$ \\
         $S$&  A subset of attributes ($S\subseteq\mathcal{A}$)\\ 
         $P_S$&  The set of all possible patterns over $S$ s.t. $C_D(p)>0$\\ 
         $L_S(D)$&  A label of $D$ using $S$\\ 
         $VC$ & The value count of each value in $D$ \\
         $PC$ & The pattern count of each tuples in $P_S$\\
         $p|_{S_1}$ & The pattern resulting when restringing $p$ to $S_1$\\
         $Est(p, l)$ & The estimation of a pattern $p$ using the label $l$\\
         $Err(l,p)$ & The error of $l$ with respect to $p$\\
         $\mathcal{P}$ & A set of patterns\\
         $Err(l,\mathcal{P})$ & The maximal error of $l$ with respect to $p\in \mathcal{P}$\\
         \hline
    \end{tabular}
    \caption{Notation Table}
    \label{tab:notations}
\end{table}

In this section we present a novel model of label construction, based on counts. A summary of the notations used throughout the paper is shown in Table \ref{tab:notations}.
We assume the data is represented using a single relational database, and that the relation's attributes values are categorical. Where attribute values are drawn from a continuous domain, we render them categorical by bucketizing them into ranges: very commonly done in practice to present aggregate results.  In fact, we may even group categorical attributes into fewer buckets where the number of individual categories is very large.


\subsection{Patterns count information}
We first define the notion of pattern which is the foundation for our label model. 

\begin{definition}[Patterns]
	Let $D$ be a database with attributes $\mathcal{A}=\{A_1,\ldots, A_n\}$ and let $Dom(A_i)$ be the active domain of $A_i$ for $i \in [1..n]$. A \emph{pattern} $p$ is a set  $\{A_{i_1} = a_1, \ldots, A_{i_k} = a_k\}$ where $\{A_{i_1},\ldots, A_{i_k}\}\subseteq \mathcal{A}$ and $a_j\in Dom(A_{i_j})$ for each $A_{i_j}$ in~$p$. We use $Attr(p)$ to denote the set of attributes in $p$.
	
\end{definition}

\begin{example}
	Consider the  fragment of the simplified version of the COMPAS database given in Figure \ref{fig:db}. 
	$p=$\{age group= under 20, marital status = singe\} is a possible pattern and  $Attr(p)=$\{age group, marital status\}. 
\end{example}	

\begin{definition}
	We say that a tuple $t\in D$ \emph{satisfies} a pattern $p$ if $t.A_i = a_i$ for each $A_i\in Attr(p)$. 
	The count $c_D(p)$ of a pattern  $p$ is the number of tuples in $D$ that satisfy~$p$. 
\end{definition}

\begin{example}
	Consider again the database given in Figure \ref{fig:db}. The tuples 1, 3, 8, 10, 12, and 14 satisfy the pattern $p=$\{age group= under 20, marital status = single\} and thus the count of~$p$~is~$c_D(p)=6$.  
\end{example}	

\begin{figure}
	\centering
	\scriptsize
	\begin{tabular}{lcccc}
		\hline
		&Gender & Age group & Race & Marital status\\ \hline 
		1&Female & under 20& African-American  & single\\
		2	&Male & 20-39& African-American & divorced\\
		3&Male & under 20& Hispanic  & single\\
		4&Male & 20-39& Caucasian & married\\
		5&Female & 20-39& African-American & divorced\\
		6	&Male & 20-39& Caucasian & divorced\\
		7&Female & 20-39& African-American & married\\
		8		&Male & under 20& African-American  & single\\
		9			&Female & 20-39& Caucasian & divorced\\
		10&Male & under 20& Caucasian  & single\\
		11&Male & 20-39& Hispanic & divorced\\
		12	&Female & under 20& Hispanic & single\\
		13&Female & 20-39& Hispanic & married\\
		14&Female & under 20& Caucasian & single\\
		15&Female & 20-39& Caucasian & married\\
		16	&Male & 20-39& Hispanic & married\\
		17&Male & 20-39& African-American & married\\
		18&Female & 20-39& Hispanic & divorced\\
		
		\hline
	\end{tabular}
	
	\caption{Sample data from a simplified version of the COMPAS dataset}
	\label{fig:db}
\end{figure}

Information regarding the count of patterns appearing in the data can be useful to determine fitness for use. It may be used to ensure a (close) to real world distribution in the attribute's values of the data to detect improper (underrepresented) or extremely high representation (data skew) of patterns, and potential dependent or correlated attributes.
While full count of each pattern provides detailed and accurate description of the data, it can be extremely large. In fact it can have the same size as the data.



\begin{example}\label{ex:simpleDB}
	As a simple example, consider a database $D$ with $n$ binary attributes $A_1, \ldots, A_n$, where each value combination $(b_1,\ldots, b_n)$, for $b_i\in \{0,1\}$, appears exactly once. In this case the database, as well as the patterns count,  includes $2^{n}$ tuples. 
\end{example}

To this end, we propose an estimation function, which estimates a pattern count based on partial count information. Our basic intuition is that information regarding the count of individual attributes values is sufficient to provide a good estimate of any pattern count if there are no correlations within the attributes. 
\begin{example}\label{ex:ind_est}
	Continuing with Example \ref{ex:simpleDB}, given the counts $c_D(\{A_i = b_i\}) = \frac{2^n}{2}$, the count of the pattern $\{A_1 = 0, A_2 = 0, A_3 = 0\}$ may be estimated as 
	\begin{multline*}
2^n\cdot \prod_{i=1}^3\frac{c_D(\{A_i = 0\}) }{c_D(\{A_i = 0\})+ c_D(\{A_i = 1\})}= 2^n\cdot \Big(\frac{1}{2}\Big)^3= 2^{n-3}
\end{multline*}
	Intuitively, under the assumption that there are no correlations, 
	the count of the pattern $\{A_1 = 0, A_2 = 0, A_3=0\}$ is the relative portion of the data (total number of $2^n$ tuples), that have the value $0$  in the attribute $A_1$, $A_2$ and $A_3$, which is reflected in the sub-expressions $\frac{c_D(\{A_i = 0\}) }{c_D(\{A_i = 0\})+ c_D(\{A_i = 1\})}$ in the computation. 	
	 In general, the count of the pattern 
	$p = \{A_{i_1} = b_{i_1},\ldots, A_{i_k} = b_{i_k}\}$ can be computed as
	$$
	|D|\cdot\prod_{j = 1}^{k} \frac{c_D(\{A_{i_j} = b_{i_j}\}) }{c_D(\{A_{i_j} = 0\}) + c_D(\{A_{i_j} = 1\})}
	$$
\end{example}

When we introduce correlations,  
the counts of individual attributes are no longer sufficient to provide a good estimation, as we next demonstrate. 

\begin{example}
	As a simple example, consider a database $D$ with $n$ binary attributes as described in Example \ref{ex:simpleDB}, except that  the values in the attributes $A_1$ are replaced such that the value of $A_1$ is equal to the value of $A_2$ for every tuple. The real count of the pattern $\{A_1 = 0, A_2 = 0,  A_3 = 0\}$ is now $2^{n-2}$, where using only the individual count the pattern count estimation is $2^{n-3}$ with the same computation shown in Example \ref{ex:ind_est}.
\end{example}

We may remedy this problem by using additional count information. In the above example, the counts of the patterns $p = \{A_1 = b_1, A_2 = b_2\}$ for $b_i\in \{0,1\}$ is sufficient to provide an exact estimate for each pattern in the database.

\begin{example}
	Given the patterns count $c_D(\{A_1 = 0, A_2 = 0\})  = 2^{n-1}$  we can compute the count of $\{A_1 = 0, A_2 = 0, A_3=0\}$ as 
	\begin{multline*}
	2^{n-1}\cdot \frac{c_D(\{A_3 = 0\}) }{c_D(\{A_3= 0\})+ c_D(\{A_3 = 1\})}= 2^{n-1}\cdot\frac{1}{2} = 2^{n-2}
	\end{multline*}
	In general, the count of any pattern $p = \{A_{i_1} = b_{i_1},\ldots, A_{i_k} = b_{i_k}\}$  (that contains $\{A_1 = b_1, A_2 = b_2\}$ for $b_i\in \{0,1\}$) 
	can be computed as
	$$
	c_D(\{A_1 = b_1, A_2 = b_2\}) \cdot\prod_{j = 3}^{k} \frac{c_D(\{A_{i_j} = b_{i_j}\}) }{c_D(\{A_{i_j} = 0\}) + c_D(\{A_{i_j} = 1\})}
	$$

\end{example}

Real world datasets are typically complex, and  have correlations among attributes. One possible way to tackle this problem is to store more information about these (large) deviations from our initial independence assumption. The challenge is to spend wisely a limited space budget to capture exactly the deviations that induce greatest error in our estimates.

\subsection{Patterns count based labels}

We next define our notion of data label. 
A label is defined with respect to a subset $S$ of the database attributes, and it contains the pattern count ($PC$) for each possible pattern over $S$ and value count ($VC$) of each value appearing in $D$. Given a subset of attributes $S\subseteq \mathcal{A}$ we use $P_S$ to denote the set of all possible patterns over $S$ (i.e., $p$ with $Attr(p) = S$) such that $c_D(p)>0$. The maximal number of patterns in $P_S$ is $\prod_{A_i \in S} |Dom(A_i)|$. 

\begin{definition}[Label]
	Given a database $D$ with attributes $\mathcal{A} = \{A_1,\ldots, A_n\}$, and a subset of attributes $S\subseteq\mathcal{A}$ a \emph{label}  $L_S(D)$ of $D$ using $S$  contains the set $PC = \{(p_i,c_D(p_i))\}$ for each $p_i\in P_S$ and the set $VC = \{(\{A_i = a_j\}, c_D(\{A_i = a_j\}))\}$ for each $A_i \in \mathcal{A}$ and $a_j\in Dom(A_i)$.
\end{definition}

\begin{example}\label{ex:label}
	Consider the database fragment given in Figure \ref{fig:db}, the label resulting from use of the attributes set $S$ = \{age group, marital status\} consists of the following:
	\vspace{-0.1cm}
	\begin{equation*}
	\begin{split}
		PC = \{&(\{\text{age group = under 20, marital status = single}\},6)\\
		&(\{\text{age group = 20-39, marital status = married}\},6),\\
		&(\{\text{age group = 20-39, marital status = divorced}\},6)\}\\
		VC = \{&(\{\text{gender = female}\},9), (\{\text{gender = male}\},9),\\
		&(\{\text{age group = under 20}\},6),\\
		&(\{\text{age group = 20-39}\},12),\\
		&(\{\text{race = African-American}\},6),\\
		&(\{\text{race = Hispanic}\},6),\\
		&(\{\text{race = Caucasian}\},6),\\
		&(\{\text{marital status = single}\},6),\\
		&(\{\text{marital status = divorced}\},6),\\
		&(\{\text{marital status = married}\},6)\}
	\end{split}
\end{equation*}
The label resulting from use of the attributes set $S'$ = \{gender, age group\} consists of the same $VC$ set and the following $PC$~set:
	\begin{equation*}
		\begin{split}
		PC = \{&(\{\text{gender = female, age group = under 20}\},3)\\
		&(\{\text{gender = male, age group = under 20}\},3),\\
		&(\{\text{gender = female, age group = 20-39}\},6),\\
		&(\{\text{gender = male, age group = 20-39}\},6)\}
		\end{split}
		\end{equation*}
\end{example}	



Note that for a given database $D$,  the $VC$ set is determined and similar for every label of $D$. This set may be large, for instance, the COMPAS dataset includes at leas 10 meaningful demographic attributes as shown in Example~\ref{ex:intro} and the Credit Card dataset~\cite{credit_card_data} we used in our experiments has over 20 attributes, including demographic factors, credit data and history of payments (see Section~\ref{sec:exp} for more details). As we show in the sequel this information is an integral part of the estimation method we propose. However, note that with a simple user interface, the label's presentation may be manually refined and attributes can be filtered-out in order to adjust the information to the user's interest.



Let $D$ be a database with attributes $\mathcal{A}$,  and $S_1$ and $S_2$ be two subsets of attributes such that $S_1\subseteq S_2\subseteq \mathcal{A}$ . Given a pattern $p\in P_{S_2}$, we use $p|_{S_1}$ to denote the pattern that results when $p$ is restricted to include only the attributes of $S_1$. Given a label of $D$ using $S_1$, we may estimate the count of each pattern in $P_{S_2}$ as follows.

\begin{definition}[Pattern Estimation]
	Let $D$ be a database with attributes $\mathcal{A}$ and $S_1\subseteq S_2\subseteq \mathcal{A}$ be two subsets of attributes.
	Given a label $l = L_{S_1}(D)$ the count estimate for a pattern $p\in P_{S_2}$ is
	$$
	Est(p,l) = c_D(p|_{S_1})\cdot\prod_{A_i\in S_2\setminus S_1}\frac{c_D(\{A_i = p.A_i\})}{\sum_{a_j\in Dom(A_i)} c_D(\{A_i = a_j\})}
	$$
	
\end{definition}

\begin{example}\label{ex:est}
	Consider again the database given in Figure~\ref{fig:db}, and the label $l = L_S(D)$ generated using $S = $\{age group, marital status\} shown in Example \ref{ex:label}. The estimate of the pattern $p=$\{gender = female, age group = 20-39, marital status = married\} using $l$ is 
	\begin{multline*}
	Est(p,l) =  \\c_D(\text{age group = 20-39, marital status = married})\cdot\\
	\frac{c_D(\{\text{gender = female}\})}{\sum_{a_j\in Dom(\text{gender})} c_D(\{\text{gender} = a_j\})} = 6\cdot\frac{9}{18} =3
	\end{multline*}	
	Using the label $l' = L_{S'}(D)$ generated from $S'=$ \{gender, age group\}, with a similar computation we obtain 
	\begin{multline*}
	Est(p,l') =  c_D(\text{gender = female, age group = 20-39})\cdot\\
	\frac{c_D(\{\text{marital status = married}\})}{\sum_{a_j\in Dom(\text{marital status})} c_D(\{\text{marital status} = a_j\})} = \\
	6\cdot\frac{6}{18} =2
	\end{multline*}
\end{example}	

%

We can then define the error of a label with respect to a pattern and a set of patterns.

\begin{definition}[Estimation Error]
	The error of a label $l = L_S(D)$ with respect to a pattern $p$ is 
	$$
	Err(l, p) = |c_D(p) - Est(p,l)|
	$$	
\end{definition}

\begin{example}
	Reconsider the estimates $Est(p,l)$ and $Est(p,l')$ of the pattern $p=$\{gender = female, age group = 20-39, marital status = married\}  shown in Example \ref{ex:est}. The count of the pattern $p$ in the database is $3$, thus the error of $l$ with respect to $p$ is $0$ and the error of $l'$ is $1$.
\end{example}	


Abusing notation, we use $Err(l, \mathcal{P}) $, for a set of patterns  $\mathcal{P}$, to denote the maximum  error in the estimate for any individual pattern in $\mathcal{P}$.

\paragraph*{Error metric}\yuval{new}
There are multiple plausible error measures which can be classified into two groups: relative and absolute error measures. An example of relative error measure, commonly used  in the field of selection estimation (see, e.g., \cite{MullerMK18, DuttWNKNC19, YangLKWDCAHKS19}) is the proportion between the selectivity estimation and the true selectivity, called $q$-$error$~\cite{MoerkotteNS09}.
$$q\text{-}error(p) = \max\Big(\frac{c_D(p)}{est(p)}, \frac{est(p)}{c_D(p)}\Big)$$
The $q$-$error$ metric is relative, symmetric, and is usually preferred since it ``fairly" penalize low selectivity estimations.

Selectivity estimation techniques are geared towards query optimizations, and relates to query plan quality \cite{MoerkotteNS09}, while our labels are designed to assist end users determine fitness for use. This difference plays a rule when choosing the error measure. We choose to focus on the absolute maximum error (rather than mean for instance), as this definition of error is stiffer and gives us a sense of the error ``bound" over a large number of patterns in the database. Our problem definition, its hardness and proposed solution holds also when using $q$-$error$, and we report the resulting $q$-$error$ of the generated labels in the out experiments (see Section \ref{subsec:acc}).

\subsection{Problem definition}

We are now ready to define the optimal label problem.

\begin{definition}[Optimal Label Problem]
	Given a database $D$, with attributes $\mathcal{A}$, a bound $B_s$ over the label size, and a set of patterns $\mathcal{P}$, the optimal label is 
	$$
	\argmin_{ S\subseteq\mathcal{A}} Err( L_S(D), \mathcal{P})\text{ such that } | P_{S}|\leq B_s
	$$
\end{definition}

\yuval{new} Intuitively, the set of patterns $\mathcal{P}$ may be defined as $P_{\mathcal{A}}$ (i.e., the set of all possible patterns that include all the attributes and every value for each attribute that appears in the data). In this case $|\mathcal{P}|=|D|$ and an optimal label would be one that minimizes the error with respect to the count of tuples in the data. Our problem definition is more flexible, and allows the user to define a different pattern set,  e.g., patterns that include only sensitive attributes.

To formally characterize the complexity of the optimization problem, we further need to define a corresponding
decision problem. We define it as the problem of determining the existence of a label with size limited by the given bound and error which does not exceed a given error bound.

\begin{definition}[Decision Problem]
	Given a database $D$, with attributes $\mathcal{A}$, a bound $B_s$ over the label size, a set of patterns $\mathcal{P}$, and an error bound $B_e$, determine if there is a label $L_S(D)$ with $|P_{S}| \leq B_s$ and $Err(L_S(D), \mathcal{P}) \leq B_e$

\end{definition}

We can show that (see proof in the appendix).
\begin{theorem}\label{prop:np-hard}
	The decision problem is NP-hard.
\end{theorem}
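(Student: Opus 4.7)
The plan is to prove NP-hardness by a polynomial-time reduction from Set Cover, a canonical NP-hard problem whose structure---choosing a bounded-size subset that covers a required collection---mirrors the bounded-size attribute selection needed to keep the estimation error small for every pattern in $\mathcal{P}$.

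Given a Set Cover instance $(U = \{u_1, \ldots, u_m\},\, \mathcal{F} = \{F_1, \ldots, F_n\},\, k)$, the reduction will build a database $D$ over $\mathcal{A} = \{A_1, \ldots, A_n\}$ together with a small number of auxiliary attributes, where each $A_i$ is binary and corresponds to the set $F_i$. Because every attribute is binary, the size bound $|P_S| \leq B_s$ with $B_s = 2^k$ is equivalent to the cardinality bound $|S| \leq k$, matching the Set Cover budget. For each universe element $u_j$ the construction will introduce a distinguished pattern $p_j \in \mathcal{P}$ together with a gadget block of tuples designed so that $c_D(p_j)$ agrees with $Est(p_j, L_S(D))$ precisely when $S \cap \{A_i : u_j \in F_i\} \neq \emptyset$. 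Setting $B_e = 0$ then compels any feasible label to pick a cover.

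The correctness argument has two directions. Forward: any $k$-cover $\mathcal{F}'$ yields $S = \{A_i : F_i \in \mathcal{F}'\}$, whose label has $|P_S| \leq 2^k$ and, by construction, $Err(L_S(D), p_j) = 0$ for every $j$. Reverse: if no $k$-cover exists, then for any $S$ with $|S| \leq k$ some element $u_j$ is missed, and its gadget forces the product-of-marginals estimate on $p_j$ to disagree with $c_D(p_j)$. The tuple count is polynomial in $n + m$, so the whole reduction runs in polynomial time, and membership in NP is immediate since a candidate $S$ can be checked by directly computing $Est(p, L_S(D))$ for every $p \in \mathcal{P}$.

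The main obstacle is engineering the per-element gadgets. For each $u_j$ one must choose tuple multiplicities that (i) inject a correlation between the ``$u_j$-block'' and the attributes $\{A_i : u_j \in F_i\}$ strong enough to be visible through the estimation formula $c_D(p_j|_S) \cdot \prod_{A_i \notin S} c_D(\{A_i = p_j.A_i\})/|D|$, yet (ii) keep the marginals $c_D(\{A_i = a\})$ at fixed symmetric values so that gadgets for different universe elements do not interfere or produce spurious zero-errors for non-covering subsets. Managing this cross-talk while keeping the data size polynomial is the delicate step; the remaining bookkeeping (polynomial-time construction, two-way correctness) is then routine.
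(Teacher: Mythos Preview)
Your high-level strategy---reduce from a covering problem, set $B_e=0$, and arrange gadgets so that zero error on every $p_j\in\mathcal{P}$ forces $S$ to hit every element---is exactly the paper's strategy. The paper reduces from \emph{Vertex Cover} rather than Set Cover, but that difference is cosmetic.

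Where your proposal has a real gap is the size-bound step. You assert that with binary attributes ``$|P_S|\le 2^k$ is equivalent to $|S|\le k$,'' and you simultaneously say you will add ``a small number of auxiliary attributes'' to tag which universe element a tuple belongs to. These two claims are in tension. The equivalence $|P_S|=2^{|S|}$ holds only if (a) every attribute in $S$ is binary \emph{and} (b) every one of the $2^{|S|}$ value combinations actually occurs in $D$ with positive count. An auxiliary attribute that distinguishes the $m$ universe elements is either multi-valued (breaking (a)) or consists of $m$ extra binary columns (which are then themselves eligible for inclusion in $S$, and whose gadget blocks will typically \emph{not} realise all $2^{|S|}$ combinations, breaking (b)). Either way the clean ``$B_s=2^k\Leftrightarrow |S|\le k$'' correspondence collapses, and you no longer control which $S$ satisfy the size bound.

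This is precisely the point where the paper does real work rather than relying on a shortcut. It introduces one multi-valued auxiliary attribute $A_E$ (one value per edge), shows via its error lemma that any zero-error label \emph{must} contain $A_E$, and then proves a separate size lemma computing $|P_S|$ exactly for sets of the form $S=\{A_E\}\cup\{A_{i_1},\dots,A_{i_k}\}$: the count is $2|E'|+4\sum_{i=1}^{k-1}i$, where $E'$ is the set of edges covered by $\{v_{i_1},\dots,v_{i_k}\}$. The bound $B_s$ is chosen to match this expression with $E'=E$, so meeting both $B_e=0$ and $|P_S|\le B_s$ forces a vertex cover of size $k$. In other words, the paper does not try to make $|P_S|$ equal $2^{|S|}$; it computes $|P_S|$ directly from the combinatorics of the gadget and tunes $B_s$ accordingly. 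Your plan would need an analogous lemma, and until the auxiliary-attribute design and the resulting $|P_S|$ count are pinned down, the reduction is not complete.
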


More complex approaches could consider overlapping combinations of patterns, derive best estimates from multiple labels, use partial patterns, and so on.  Such complex approaches are left to future work.

\section{Optimal Label Computation}
\label{sec:algo}

Given a database $D$ with attributes $\mathcal{A} = \{A_1, \ldots, A_n\}$ and a bound $B_s$, a naive algorithm for the optimal label computation would operate as follows: iterate over possible attributes sets, starting with set of size $2$. At each iteration, compute the set of all possible labels with a fixed size, namely, at the $i$'th iteration the algorithm generate the labels $\{L_{S_1}(D),\ldots, L_{S_k}(D)\}$, where each $S_j$ for $j\in[1..k]$ is a subset of attributes of size $i+1$. For each label generated, compute its size and error, and record the optimal label computed with size below the given bound. The algorithm terminates if the size of all the labels generated in the same iteration exceeds the bound (or when all possible subsets were generated). Intuitively, if every attribute subset of size $i$ leads to a label with size greater than the given bound, then, every label generated using any attributes subset of size $>i$ would also exceed the bound.
The naive algorithm is unacceptably expensive. Therefore we developed a much faster 
heuristic solution for the optimal label problem.

\subsection{Label estimation characterization}\label{sec:label_char}

We start by characterizing the count estimation for a given pattern using a given label. Let $D$ be a database with attributes $\mathcal{A}$, $S\subseteq\mathcal{A}$ an attributes set and $l=L_S(D)$ a label of $D$ using~$S$. 

\begin{definition}
	 Given a pattern $p$, we say that the  estimate of $p$ using $l$ is 
	 \begin{compactitem}
	 \item
	 an \emph{exact estimation} if $Est(l,p) = c_D(p)$, 
	 \item
	 an \emph{over estimation} if $Est(l,p) > c_D(p)$, and 
	 \item 
	 an \emph{under estimation} if $Est(l,p) < c_D(p)$.
	 \end{compactitem}
\end{definition}

Clearly, for every pattern $p$ if $Attr(p)\subseteq S$ then the estimate of $p$ using $l$ is an exact estimation. Moreover, we can show the following:


\begin{proposition}\label{prop:subset_acc}
	Given two attribute sets  $S_1\subseteq S_2\subseteq\mathcal{A}$ and $l_i=L_{S_i}(D)$ the labels of $D$ using $S_i$ for $i=1,2$ respectively, for every pattern $p$
	such that $Attr(p)\not\subseteq S_2$ let $p' = p|_{Attr(p)\cap S_2}$ be the pattern resulting when restricting $p$ to include only the attributes appearing in $S_2$.
	If the estimate of $p'$ using $l_1$ is an over (under) estimation, and the estimate of $p$ using $l_2$ is an over (resp., under) estimation then $Err(l_2, p)\leq Err(l_1,p)$.
\end{proposition}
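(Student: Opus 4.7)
The plan is to reduce the comparison of $Err(l_1, p)$ and $Err(l_2, p)$ to an inequality purely in terms of the restricted pattern $p'$, by exploiting the fact that both labels are equally ignorant of the attributes in $Attr(p) \setminus S_2$: since $S_1 \subseteq S_2$, none of those attributes lie in either $S_i$, so each label can only handle them via the independence (i.e., $VC$) term of the estimation formula.

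First I would unfold the definition of $Est$ and single out the common quantity $\alpha = \prod_{A_i \in Attr(p)\setminus S_2} \frac{c_D(\{A_i = p.A_i\})}{\sum_{a_j \in Dom(A_i)} c_D(\{A_i = a_j\})}$. Using $S_1 \subseteq S_2$, the index set $Attr(p) \setminus S_1$ splits as the disjoint union $((Attr(p) \cap S_2) \setminus S_1) \cup (Attr(p) \setminus S_2)$, and the restrictions satisfy $p|_{Attr(p) \cap S_1} = p'|_{Attr(p') \cap S_1}$. Substituting these observations into the estimation formula yields $Est(p, l_1) = Est(p', l_1) \cdot \alpha$. For $l_2$, since $Attr(p') = Attr(p) \cap S_2 \subseteq S_2$, the $c_D$-term in $Est(p, l_2)$ evaluates to the true count of $p'$, giving $Est(p, l_2) = c_D(p') \cdot \alpha$.

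With this factorization in hand, the target inequality becomes $|c_D(p) - c_D(p')\alpha| \leq |c_D(p) - Est(p', l_1)\alpha|$, which I would then resolve by case analysis. In the over-estimation case, the hypotheses give $c_D(p') < Est(p', l_1)$ and $c_D(p) < c_D(p')\alpha$; together with $\alpha \geq 0$ they also imply $c_D(p) < Est(p', l_1)\alpha$, so both absolute values open with the same sign and the inequality reduces to $(Est(p', l_1) - c_D(p'))\alpha \geq 0$, which is immediate. The under-estimation case is handled by a symmetric sign flip.

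The main obstacle is really just the careful bookkeeping in the factorization step---verifying that the product over $Attr(p) \setminus S_1$ in $Est(p, l_1)$ cleanly separates into the product that appears inside $Est(p', l_1)$ and the extra factor $\alpha$, and that the restricted sub-patterns agree on $Attr(p) \cap S_1$. Once both estimates are exhibited in the form $(\cdot) \cdot \alpha$ with the same $\alpha$, each sign case collapses to a one-line calculation, and the statement for mixed signs (which the proposition does not claim) can be seen to genuinely require both assumptions.
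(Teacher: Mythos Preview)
Your proposal is correct and follows essentially the same approach as the paper's proof: both factor the estimates through the common marginal term $\alpha$ over $Attr(p)\setminus S_2$, obtain $Est(p,l_1)=Est(p',l_1)\cdot\alpha$ and $Est(p,l_2)=c_D(p')\cdot\alpha$, and then compare using the sign hypotheses. The paper presents only the under-estimation case explicitly (appealing to symmetry for the other) and phrases the final step as the chain $Est(p,l_1)<Est(p,l_2)<c_D(p)$, but the substance is identical to yours.
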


\begin{example}
	Suppose we are interested in estimating the number of married Hispanic females under the age of 20 in the data. Proposition \ref{prop:subset_acc} states that if the estimation of a label $l_1$ consisting of the count for gender and age combinations leads to an over (or resp. under) estimation of the pattern $p'=$\{gender = female, age = under 20, marital status = married\}, and the estimation  using a label $l_2$ generated with the count of the gender, age and marital status leads to an over (or under) estimation of $p=$\{gender = female, age = under 20, race = Hispanic, marital status = married\}, then $Err(l_2, p)\leq Err(l_1,p)$.
\end{example}

Intuitively, for two attributes sets $S_1$ and $S_2$, if $S_1\subseteq S_2$ the label generated using $S_2$ has more details than the one generated using $S_1$. In fact, based on Proposition \ref{prop:subset_acc}, it is reasonable to assume that  the pattern's count estimation using $L_{S_2}(D)$ is more precise than the one using  $L_{S_1}(D)$. We show that this assumption indeed holds in practice in our experiment (see Section \ref{sec:sub_lable_acc}).


Our proposed solution is based on the above observation. Our algorithm is inspired by the Apriori algorithm~\cite{apriori} and the Set-Enumeration Tree for enumerating sets in a best-first fashion~\cite{Rymon92}.
We start by defining a lattice over the possible labels, and then show how it can be used to  compute the optimal label.



\subsection{Labels lattice}

We define a labels lattice as follows.

\begin{definition}[Labels lattice]
	Given a database $D$ with attributes $\mathcal{A}$, let  $\mathcal{A}^*$ be the set of all possible subset of $\mathcal{A}$. The \emph{label lattice} of $D$ is a graph $G = (V, E)$, where $V= \mathcal{A}^*$ and 
	$E = \{\{S_1, S_2\}\mid S_1\subset S_2 \text{ and } \exists A_i \in \mathcal{A}\text{ s.t.  }S_1\cup\{A_i\} = S_2\}$.
	
	$S_1$ is a parent (child) of $S_2$ if there is an edge $\{S_1, S_2\}$ and $S_1\subset S_2$ ($S_2\subset S_1$).
\end{definition}

Intuitively, $S_1$ is a parent of $S_2$ if $S_2$ can be obtained from $S_1$ by adding a single attribute $A\in \mathcal{A}\setminus S_1$.
Figure \ref{fig:lattice} depicts the label lattice of the database given in Figure \ref{fig:db} ($g$, $a$, $r$ and $m$ are use as abbreviations for gender, age group, race and marital status).

\begin{figure}
	\centering
	\includegraphics[width = 0.7\linewidth]{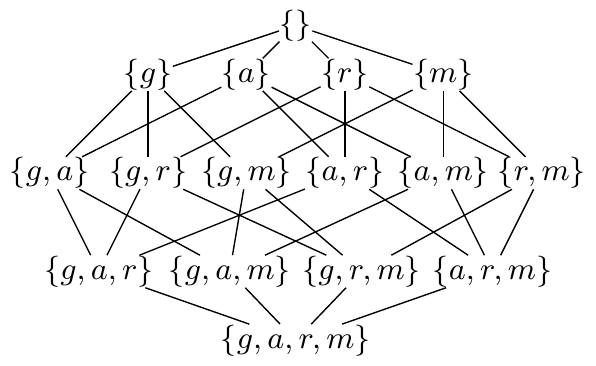}
	\caption{A label lattice}
	\label{fig:lattice}
\end{figure}

We note that, due to the nature and purpose of the labels (i.e., conciseness that allow for user friendly visualization), the typical bound over the label size is small. Thus, a natural way to scan the lattice is from the top down. Traversing the lattice does not require explicit representation of the graph, as children nodes can be generated on demand from their respective parents. Moreover we can generate each node in the label lattice exactly once in a top down scan as we next show. To this end we define the operator $gen(S)$ for a subset of attributes $S$ as follows.

\begin{definition}
	Let $D$ be a database with attributes $\mathcal{A}=\{A_1,\ldots, A_n\}$. We assume attributes are ordered, and for a given subset of attributes $S\subset\mathcal{A}$ we use $idx(S)$ to denote the index of the attribute with maximal attribute index in $S$, namely $idx(S) = \max_i(\{A_i\mid A_i \in S\})$, we define $$gen(S) = \{S'\mid S'=S\cup \{A_j\}~ \forall j~\text{s.t. } idx(S) < j \leq n\}$$
\end{definition}
For a given attributes set $S$, the set $gen(S)\subseteq children(S)$ where $children(S)$ is the set of all children of $S$ in the label lattice of $D$.
\begin{example}
	For the database $D$ given in Figure \ref{fig:db} and the attributes subset  $S=$\{gender, race\}, $gen(S)$ is \{gender, race, marital status\}. Note that \{gender, age group, race\} is a child of $S$ in the labels lattice, but is not included in $gen(S)$.
\end{example}

\subsection{Top down algorithm}
\yuval{made some slight changes}
Algorithm \ref{algo:top_down} finds the optimal label using a top down traversal of the label lattice. The algorithm gets as input a database $D$, a set of patterns, and a bound $B_s$.
It uses a queue $Q$ to generate a candidate list of attributes subset, $cands$, such that the size of the label generated using each candidate in the list does not exceed the bound $B_s$. 

The algorithm first initializes the queue $Q$ with the set of attribute's singletons using $gen(\{\})$ (line \ref{line:initQ}), and the candidates set $cands$ to an empty set (line \ref{line:initCands}). Then while the queue $Q$ is not empty (lines \ref{line:whileStrart}~--~\ref{line:whileEnd}), the algorithm examines the first element in the queue $curr$ (line \ref{line:pop}). It traverses over the elements in $gen(curr)$ (lines \ref{line:forStart}~--~\ref{line:whileEnd}), and for each element $c$, checks if the size of the label generated by $c$ is not greater than $B_{s}$ (line \ref{line:if}). If so, the algorithm adds $c$ to the queue (line~\ref{line:addToQueue}) and update the candidates list, by removing the parents of $c$ that are currently in $cands$ (line \ref{line:remove}), and adding $c$ to the $cands$ list (line \ref{line:addCand}). 
Finally, the label that entails the minimal loss out of the set of all labels generated using  the attributes sets in the $cands$ list is returned (line \ref{line:return}).

\SetInd{1.3ex}{1.3ex}
\begin{algorithm}[t]
	\DontPrintSemicolon
	\SetKwInOut{Input}{input}\SetKwInOut{Output}{output}
	\LinesNumbered
	\Input{A database $D$, a set of patterns $\mathcal{P}$ and a bound $B_s$. }
	\Output{Optimal label.} \BlankLine
	\SetKwFunction{labelSize}{\texttt{labelSize}}
	\SetKwFunction{removeParents}{\texttt{removeParents}}
	\SetKwFunction{varCounts}{\texttt{varCounts}}
    $Q = [gen(\{\})]$ \label{line:initQ}\\	
	$cands = \emptyset$ \label{line:initCands}\\
	\While{$Q$ is not empty} 
	{\label{line:whileStrart}$curr\gets Q.dequeue()$ \label{line:pop}\\
		\For{$c \in gen(curr)$ \label{line:forStart}}
		{\If{\labelSize{$c$, $D$} $\leq B_{s}$ \label{line:if}}
			{$Q.enqueue(c)$ \label{line:addToQueue}	\\
				\removeParents{$cands, c$} \label{line:remove}\\
				$cands \gets cands\cup \{c\}$ \label{line:addCand}
				\label{line:whileEnd}
				}
			}
		}
	\Return $L_S(D)$ for $\argmin_{ S\in cands} Err( L_S(D), \mathcal{P})$\label{line:return}
	\caption{Top down search}\label{algo:top_down}
\end{algorithm} 

\yuval{new}
\begin{example}
Given the database $D$ shown in Figure \ref{fig:db}, the pattern's set $\mathcal{P}$ that contains the set of all tuples in $D$, and the bound $B_s = 5$, the algorithm first initializes $Q$ to be $[\{g\},\{a\}, \{r\}, \{m\}]$, and $cands$ to be an empty set. In the first iteration, $\{g\}$ is extracted from $Q$ and it's children, $\{\{g,a\},\{g,r\}, \{g,m\}\}$, are generated using $gen(\{g\})$. Out of this set, $\{g,a\}$ is the only subset that results in a label of size below $5$, and therefore is added to $Q$ and to $cands$. In the next iteration, $\{a\}$ is extracted from $Q$, and the algorithm examines the elements in $gen(\{a\} = \{\{a,r\},\{a,m\}\}$.  The label generated with $\{a,r\}$ is of size $3$ and the label generated with $\{a,m\}$ is of size $6$, thus only $\{a,r\}$ is added to $Q$ and $cands$. No other subset in the following iterations generates a label of adequate size, and the while loop terminates after all the elements in $Q$ are extracted. Finally, $cands$ contains $\{g,a\}$ and $\{a,m\}$, and the algorithm returns the label generated using $\{a,m\}$ since it is the optimal in this case.
\end{example}


By traversing the lattice in a top down fashion using the $gen$ operator the algorithm generates each node in the lattice at most once. Furthermore, the nodes generated are only attribute sets that lead to labels with size below the given bound, and (in the worst case) their children.

\begin{proposition}
	Given a database $D$, a set of patterns $\mathcal{P}$ and a bound $B_s$, Algorithm \ref{algo:top_down} generates each node in the label lattice at most once.
\end{proposition}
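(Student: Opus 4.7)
The plan is to exploit the asymmetry built into the $gen$ operator: although in the full lattice every non-empty subset $S$ of size $k$ has $k$ parents, only one of these parents is allowed to generate $S$ via $gen$. Consequently each node can be enqueued at most once, and $cands$ can receive it at most once.

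First I would fix an arbitrary non-empty $S = \{A_{i_1}, \ldots, A_{i_k}\} \subseteq \mathcal{A}$ with $i_1 < i_2 < \cdots < i_k$ under the global ordering on $\mathcal{A}$, and determine precisely which parents $S'$ of $S$ satisfy $S \in gen(S')$. By definition $gen(S')$ only produces supersets obtained by adding an attribute $A_j$ with $j > idx(S')$. So if $S = S' \cup \{A_{i_r}\}$ for some $r \in [1..k]$, then $idx(S') = i_k$ when $r < k$, which forces $i_r > i_k$, contradicting $r < k$. Hence the only admissible parent is $S^\star := S \setminus \{A_{i_k}\}$, for which $idx(S^\star) = i_{k-1} < i_k$ and indeed $S \in gen(S^\star)$. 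This gives the key structural fact: every non-empty $S$ has a \emph{unique} gen-parent.

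Next I would use this to prove the statement by structural induction on $|S|$. The base case $|S| = 1$ is immediate: singletons are produced exactly once, namely in the initialization of $Q$ via $gen(\{\})$ on line~\ref{line:initQ}, and are never regenerated later because no strict superset equals a singleton. For the inductive step, suppose $S$ with $|S| = k \geq 2$ is generated during an iteration of the while loop. Any generation of $S$ inside the for loop at line~\ref{line:forStart} must arise from some $curr$ with $S \in gen(curr)$; by the uniqueness argument above, $curr$ must equal $S^\star$. Hence $S$ is generated in exactly those iterations where $S^\star$ is dequeued. By the induction hypothesis $S^\star$ was enqueued at most once, and since a queue dequeues each element at most once, $S^\star$ is dequeued at most once, so the for loop ``at $curr = S^\star$'' executes at most once, producing $S$ at most once.

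The only subtlety I would need to watch for is the interaction between generation and insertion: a node $c$ is added to $Q$ (line~\ref{line:addToQueue}) and to $cands$ (line~\ref{line:addCand}) only inside the for loop that iterates over $gen(curr)$, so the ``generation'' event and the ``enqueue'' event coincide. The removeParents call on line~\ref{line:remove} only deletes elements from $cands$ and cannot reintroduce $c$ elsewhere in the lattice. Together with the uniqueness of the gen-parent this closes the argument. I do not expect a real obstacle; the main thing to get right is the case analysis showing that $r < k$ is impossible, which is a one-line consequence of the definition of $idx$ and of $gen$.
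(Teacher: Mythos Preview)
Your argument is correct. The paper itself states this proposition without proof, treating it as immediate from the definition of $gen$; your write-up makes explicit exactly the mechanism the authors rely on implicitly, namely that each non-empty $S$ has a \emph{unique} gen-parent $S^\star = S\setminus\{A_{idx(S)}\}$, and then closes with a clean induction on $|S|$. The case analysis showing that removing any attribute other than the one of maximal index cannot yield a valid gen-parent is exactly the right observation, and your handling of the base case (singletons come only from $gen(\{\})$ in line~\ref{line:initQ} and never from the for loop, since $gen$ strictly increases cardinality) is also correct. There is nothing to fix.
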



Algorithm \ref{algo:top_down} avoids generating and exploring a large portion of the labels lattice, and in particular most of the labels that exceed the bound limit (which in practice are the majority, as shown by our experiments in Section \ref{subsec:optimization}).

\section{Experimental Evaluation}
\label{sec:exp}

We conducted experiments on real data to assess the quality of our proposed labels in estimating the data pattern's count. The key concerns are the size of label and the error in estimation.  We evaluated this trade off and considered the impact of data set parameters. We compared our label's accuracy to the performance of a real DBMS estimator, and the conventional approach of sample based estimation using different error measures.
A second issue we studied is the performance of the label generation algorithm.  We examined scalability in terms of label generation time as a function of (i) label's size bound, (ii) data size and (iii) number of data attributes.
We also quantified the usefulness of the heuristic approach compared to the naive algorithm.
Finally, we validated the assumption from Section \ref{sec:label_char} that more detailed labels lead to lower error.
In this section, we report on all these experiments in turn.  We begin with the set up we used.


\subsection{Experimental setup}

We used three real datasets with different numbers of tuples and attributes as follows.
\begin{compactdesc}
	\item[BlueNile] Blue Nile is an online jewelry retailer. We used the dataset collected and used in \cite{AsudehJJ19} of diamonds catalog, containing 116,300 diamonds. The dataset has 7 categorical
	attributes for the diamonds: shape, cut, color,
	clarity, polish, symmetry, and florescence.
	\item[COMPAS]  The COMPAS dataset was collected and published by ProPublica\cite{compas}. It contains 60,843 records that includes demographics, recidivism scores, and criminal offense information. The total number of attributes in the original database was 29. We removed id attributes (person id, assessment id, case id), names (first, last and middle), dates and attributes with less than 2 values or over 100 values. We added the attribute age, with four age ranges, based on the date of birth attribute. The resulting dataset contains 17 attributes.
\item[Default of Credit Card Clients Dataset\cite{credit_card_data}] This dataset contains information on default payments, demographic factors, credit data, history of payment, and bill statements of credit card clients in Taiwan from April 2005 to September 2005. It has 24 attributes and 30,000 tuples. We bucketize each numerical attribute into 5~bins. 
\end{compactdesc}

In all the experiments we set $\mathcal{P}$, the patterns set, to be $P_\mathcal{A}$ where $\mathcal{A}$ in the set of all attributes in the dataset; namely, the set of possible patterns that include all the attributes and every value for each attribute that appears in the data.
The experiments were executed on macOS Catalina, 64-bit, with
16GB of RAM and Intel Quad-Core i7 3.1 GHz processor.  All algorithms were implemented in Python 3.

\setlength{\belowcaptionskip}{-8pt}
\begin{figure*}
	\centering
	\includegraphics[width =\linewidth]{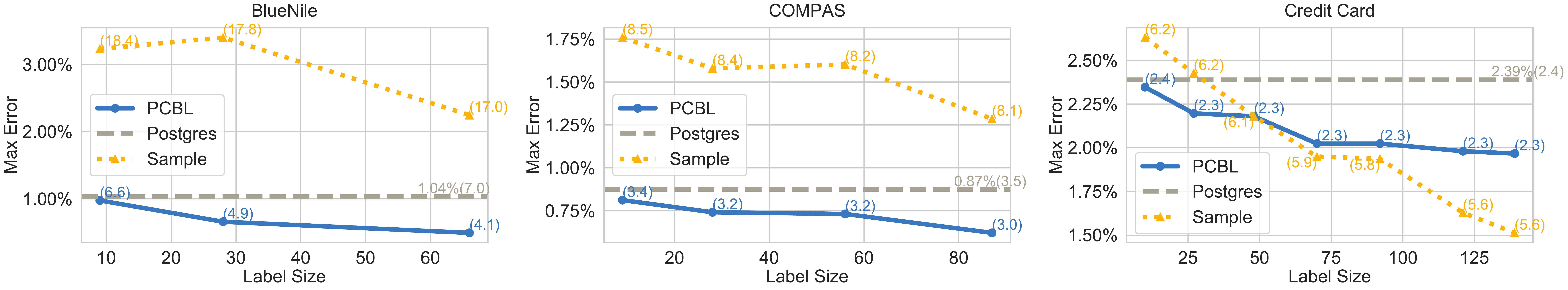}
	\caption{Absolute max error as a function of label size (mean values are shown in parenthesis)} \label{fig:label_acc}
	\vspace{-0.2cm}
\end{figure*}

\setlength{\belowcaptionskip}{-8pt}
\begin{figure*}
	\centering

		\includegraphics[width=\linewidth]{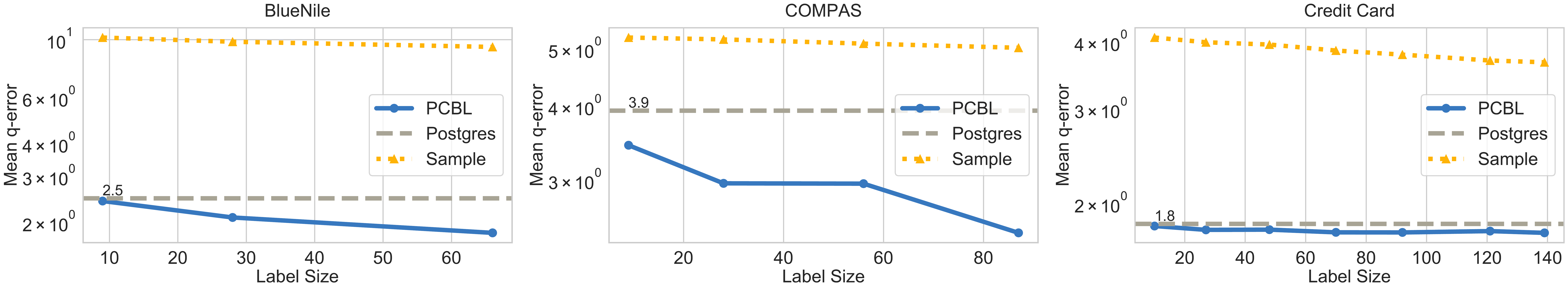}

	\caption{Mean q-error as a function of label size} \label{fig:q-err}
\end{figure*}

\subsection{Label accuracy} \label{subsec:acc}
We assessed the quality of the generated labels in estimating the data pattern's count by examining the error induced by the labels of varying size with respect to the set of patterns appearing in the database. We varied the label's size bound from 10 to 100 to generate labels with different size.

\paragraph*{Compared Baselines} We
have measured the accuracy of our proposed pattern count based label (PCBL, blue line in the graphs) to two baseline approaches.
\begin{compactdesc}
    \item[PostgreSQL] The PostgreSQL row estimation relies on 1D histograms. It stores the statistical data about the database in {\small\texttt{pg\_statistic}} and random sampling while producing statistics. 
    \item[Sampling] Uniform random sample with growing size. The size of a sample that corresponds to the bound $x$ is $x + |VC|$\footnote{Recall that the bound $B_s$ is over the pattern count set size $|PC|$, see Section \ref{sec:model}.}. Given a sample $S$ of size $|S|$ for a dataset $D$, and a pattern $p$, we use $c_S(p)\cdot\frac{|D|}{|S|}$ to estimate the count of $p$ in $D$, where $c_S(p)$ is the count of $p$ in $S$. 
\end{compactdesc}

\paragraph*{Error Measures}
We compared the quality of the estimation method with different error measures. 
\begin{compactdesc}
\item[Absolute error] Error was measured as the absolute value of difference in count between the actual and estimated count for each pattern. Recall that the absolute maximum error is  our estimation error measure (as we defined it in Section \ref{sec:model}). 
    \item[Q-error]  The factor by which an estimate differs from the actual count (see definition in Section \ref{sec:model}).
    This error measure is a standard accuracy metric in query estimation, where the accuracy is reported as mean q-error. To avoid division by zero, we set $est(p) = 1$ whenever the actual estimation was $0$.
\end{compactdesc}

For all three datasets, we observed similar errors for the label generated by the optimal heuristic and the one generated by the naive algorithm (blue line in the graphs). In all cases {\small\texttt{pg\_statistic}} contained over 400 rows ($429$ in the BlueNile dataset, $439$ in the COMPAS dataset, and $446$ in the Credit Card). The accuracy is independent of the label size, and is marked with a gray line in the graphs. For the sample based estimation we report the average over 5 executions and the results are marked in yellow.

Figure \ref{fig:label_acc} shows the absolute max error (mean error values are shown in parenthesis) as a function of the label size.
The maximal error is presented as a fraction of the data size.
For the BlueNile dataset the maximum estimation error was $1136$ (less than $1\%$) for a label of size $9$ (generated when setting the bound to $10$).  When setting the bound to $100$ we obtained a label of  size $66$ with maximum error of $575$ (around $0.5\%$). The postgres maximal error was 1204 ($1.04\%$) and the mean was about $7$. In the sample based estimation we observed a small increase in the maximal error for a sample of $75$ (corresponds to label with $|PC|=28$, bound of 30). This is because the sample size is significantly smaller that the database size, thus $\frac{|D|}{S}$ is larger that the count of all tuples in the data, which results in over estimation for all tuples in the sample, and estimation of $0$ for the rest. In particular, if the count of a pattern is greater than $2$ (as in one of the executions in this experiment) the overestimation is even higher. The mean error of the sample based method decreased from $18.44$ for the smallest sample size ($\times3$ of the PCBL) to $17.04$ in the largest sample (over $\times4$ of the PCBL).


For the COMPAS dataset, the size of the label generated when setting the bound to $10$ was $9$ and the maximum error induced by the generated labels was $494$ (about $0.8\%$). For a label of size $87$, generated with bound of $100$, the maximum error was $378$ (a little over $0.6\%$). In the postgres estimations the maximal error was $532$ ($0.87\%$) and mean error of $3.48$. The maximal error of the sample based estimation was $1070$ for the smallest sample size, and $782$ for the largest.


The label obtained with bound of $10$ contained $10$ pattern-count pairs in the Credit Card dataset. The maximum error was $704$ ($2.3\%$). For a label with $92$ pattern-count pairs (generated with the bound set to 100), we obtain  maximum error of $607$ ($2.0\%$). The maximum observed error remains $607$ when we increased the label size bound from $70$ to $100$ (generating labels of size 70 and 92 respectively). We note that the mean error decreased $2.2978$ to $2.2974$. To further demonstrate the trend, we examine the error of labels generated with bound set to $125$ and $150$, which generated labels of size $121$ and $139$ respectively. The postgres maximal error estimation was $717$ ($2.39\%$), with mean of $2.44$. The maximal error of the average sample based estimation decreased from $789$  to $453$, which is slightly better than the results of the PCBL, however the mean error was higher, $6.25$ to $5.59$ (about $\times 3$ of PCBL). 


The mean $q$-$error$ is shown in Figure \ref{fig:q-err}. In all cases, PCBL outperformed the competitors, and we observed a decrease in the error as the label size grows. For the BlueNile dataset the max $q$-$error$ for the smallest label was $47$ compared to average of $2039$ using the corresponding samples. The mean was $2.4$ and $10.2$ respectively. The PCBL max $q$-$error$ dropped down to $25$ with mean of $1.8$ using the largest label. The max error using the largest sample was $1335.4$ and the mean was $9.4$.  
The postgres maximal $q$-$error$ in this case was $45$ and the mean was $2.5$. 
In the COMPAS dataset, the max $q$-$error$ was $234$ and $715$ for the PCBL and sample methods respectively for the smallest bound, with mean of $3.4$ and $5.2$.  For the largest bound the max $q$-$error$ was $101$ and $387$, and the mean was $2.4$ and $5.02$ using PCBL and the sample respectively. In the postgres estimation we obsrved a max error of $234$ and mean of $3.9$. 
Finally, for the Credit Card dataset the observed max error was $47$ in all label's sizes using the PCBL, and the mean decreased form $1.8$ to $1.7$. Using the samples, the max error was $426.8$ and $238.6$, with mean of $4.1$ and $3.7$ for the smallest and largest samples respectively. The postgres max $q$-$error$ was $47$ and the mean was $1.8$.

\subsection{Label generation time}\label{subsec:genTime}
The next set of experiments aims at studying the scalability of the algorithms for label generation.
We compared the performance of our proposed optimized heuristic algorithm (dark blue) to a baseline naive algorithm described in Section~\ref{sec:algo} (light blue). 
\yuval{new} The reported times for the optimized heuristic are the total generation time, including both the candidates search time and finding the best label in the candidate set. 
Since the number of patterns is large (the same size of the database), the latter may be costly. However, as we use maximal error, we were able to optimize it as follows. We sort the patterns by count in a decreasing order. Then, when traversing the patterns, we compute the error for each one, while tracking the maximal error observed. Once we reach a pattern with lower count than the observed maximal error we terminate. On average, finding the optimal label out of the candidates set was $62.6\%$ of the total running time in the BlueNile dataset, $18\%$ in COMPAS and $44.4\%$ in the Credit Card dataset.


\begin{figure*}[ht]
	\centering
	\includegraphics[width =\linewidth]{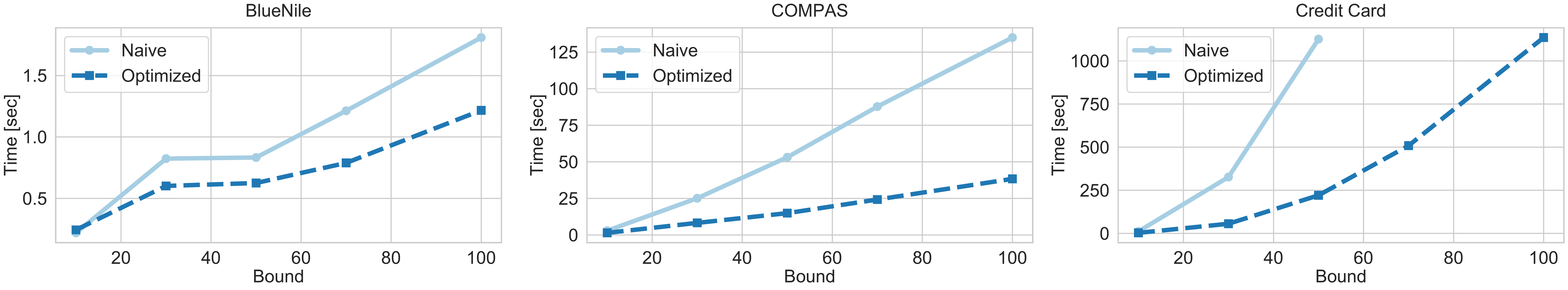}
	\caption{Label generation runtime as a function of label size bound} \label{fig:label_gen_rtime}
\end{figure*}

Figure \ref{fig:label_gen_rtime} depicts the running time as a function of the label's size bound from bound 10 and up to 100. As the bound grows, the number of possible attributes subsets that may be used to generate an optimal label increases, which affect the generation time for both algorithms. The optimized heuristic outperform the naive algorithm since the number of subsets it consider is smaller (we give the actual number of subsets in Section \ref{subsec:optimization}). In the Credit Card dataset, the naive algorithm did not terminate within $30$ minutes beyond bound of $50$. For bound of $50$ the naive algorithm running time was over $18$ minutes.  The optimal heuristics was able to compute the label for bound of $50$ with about $3.5$ minutes, and the label for the largest bound of $100$ within $18$ minutes.

\begin{figure*}
	\centering
	\includegraphics[width=\linewidth]{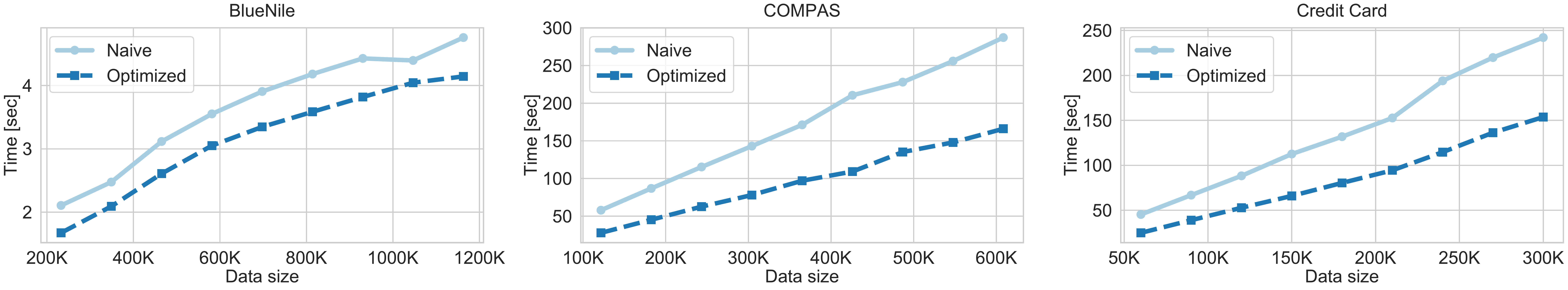}
	\caption{Label generation runtime as a function of data size} \label{fig:db_size}
\end{figure*}



The next experiments aim at assessing the effect of the data size (i.e., number of tuples) and the number of attributes, on the label generation time. We note that the number of attributes subsets examined by the algorithms to generate the optimal label depends (exponentially) on the number of attributes, whereas the number of tuples affects the examination time of each subset (i.e., measuring it's size and error rate). Thus, we expect to see a moderate growth in the label generation time as a function of the database size, and a steep growth in the generation time as a function of the number of attributes.

To study the effect of the data size on the algorithm's running time we gradually increased the data size by adding randomly generated tuples to the datasets. We increased the data size up to $\times 10$ the original data size. We repeated each experiments $5$ times and report the average running  time of the label generation for the bound of $50$ in Figure \ref{fig:db_size} (we observed similar trends
for other bound setting). As expected, we observed a moderate growth with respect to the data size for all three datasets.



Interestingly, in the Credit Card dataset, the performance of both algorithms for the dataset with 60,000 tuples (45 and 24 seconds for the naive algorithm and the optimal heuristic respectively-- first point in the rightmost graph in Figure \ref{fig:db_size}) was better than  their respective performance over the original 30,000 tuples (18 minute for the naive algorithm and 221 seconds for the optimal heuristic--3'rd point in the corresponding graph in Figure \ref{fig:label_gen_rtime}). 
The reason for that is that by adding new randomly generated tuples, we introduced new patterns that were missing in the original data. As a result, the number of attribute subsets examined by the algorithm, and in turn the overall running time of the algorithms, decreased.  To illustrate, the number of attribute sets examined by the naive algorithm for the original dataset was 536,130 and 9,156 for the optimized heuristic. For the date (with randomly generated tuples) of 60,000 tuples the naive algorithm examined 12,926 attribute sets, and the optimized heuristic only 785 sets.

\begin{figure*}
	\centering
		\includegraphics[width = \linewidth]{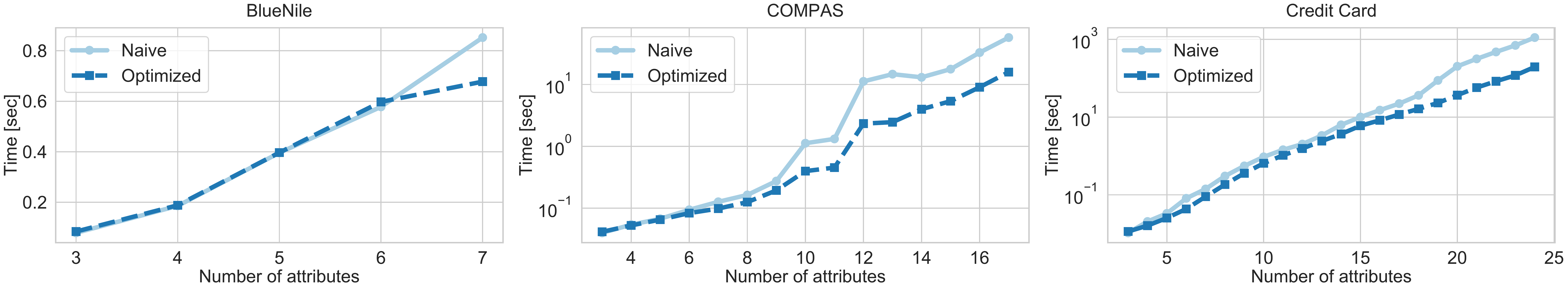}
	\caption{Label generation runtime as a function of number of attributes}
	\label{fig:attr_num}
\end{figure*}

Figure \ref{fig:attr_num} depicts the running time as a function of the number of attributes. We fixed the bound to $50$ and varied the number of attributes in the datasets from $3$ to $|\mathcal{A}|$ where $\mathcal{A}$ is the set of all attributes in the dataset. The effect on the running times was more notable in the COMPAS and the Credit Card datasets since they contain larger numbers of attributes. The results for these datasets are thus  presented in log scale.

\setlength{\belowcaptionskip}{-10pt}
\begin{figure*}
	\centering
	
		\includegraphics[width=\linewidth]{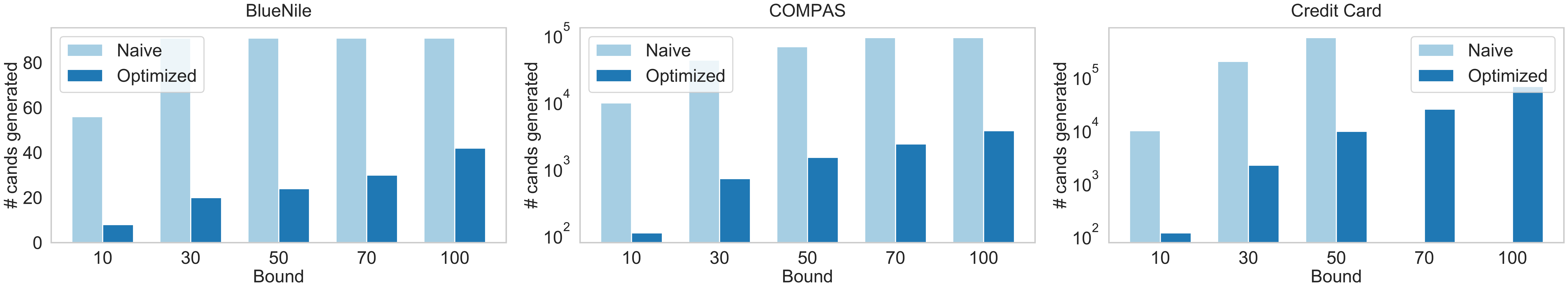}
	
	\caption{Number of labels candidates examined as a function of label size bound} \label{fig:cands_num}
\end{figure*}

\subsection{Effect of optimization}\label{subsec:optimization}
Recall that our heuristic optimizes the number of attribute sets examined during the search of the optimal label. To quantify the usefulness of our heuristic, we compared the number of attributes sets examined during the label generation by the optimized heuristic and the naive algorithm. We observed a gain of up to $99\%$ in the number of subsets examined as shown in Figure \ref{fig:cands_num}.  

For the BlueNile dataset we observed the lowest gain of $54\%$: $91$ subsets examined by the naive algorithm compared with $42$ by the optimized heuristic for a bound of $100$. The largest gain in this dataset was $86\%$ for a bound of $10$ ($56$ for the naive algorithm and $8$ for the optimized heuristic). For the largest bound, the naive algorithm generate $71\%$ of all possible attributes subsets, while the optimized heuristic generate only~$33\%$.

The gain in the COMPAS dataset varied from $96\%$ (89,828 compared to 3,594 for a bound of $100$), and up to $99\%$ for a bound of $10$ (9,384 by the naive algorithm compared to $106$ by the optimized heuristic). In the worst case the naive algorithm examined $69\%$ of of all possible attributes subsets, and the optimized heuristic examined only~$3\%$.

For a bound of $50$, the number of subsets generated for the  Credit Card dataset by the naive algorithm was 536,130 wheres the optimal heuristic generated only 9,156 subsets, a gain of $98\%$. For a bound of $10$ and $30$ the gain was $99\%$ (9,384 compared with $112$, and  190,026 compared with 2,102 resp.). For a bound of $100$ the heuristic algorithm generated 64,312 attributes subsets, only $0.4\%$ of the total number of possible subsets (recall that the naive algorithm did not terminate within $30$ minutes beyond bound of $50$).

\subsection{Sub-labels accuracy}\label{sec:sub_lable_acc}
The goal of our last experiment was to validate the assumption from Section \ref{sec:label_char} indeed takes place in practice. 
Namely, that the error entails from a label generated using a subset of attributes $S$ is at most the error entails by the label generated using any subset of $S$. To this end, we used the subset of attributes $S$ used to generate optimal label  (for a bound of $100$) for each dataset, and examine the error incur by the labels generated with each possible subset of $S$. 

The dark bars in Figure \ref{fig:sub_labels} depict the maximum error for the optimal label for each dataset (orange for BlueNile, green for COMPAS and purple for the Credit Card). The light bars shows the maximum error of the labels generated from the attributes sets obtained by removing a single attribute from the set used to generate the optimal label. 

For the BlueNile dataset, the optimal label was generated using the attributes {\small\texttt{cut}}, {\small\texttt{shape}} and {\small\texttt{symmetry}}. The maximum error for the label generated using this set of attributes was $0.49\%$ (the dark orange bar). The light orange bars shows the maximum error rate observed for the labels generated using the sets {\small\{\texttt{cut}, \texttt{shape}\}}, {\small\{\texttt{cut}, \texttt{symmetry}\}}, {\small\{\texttt{shape}, \texttt{symmetry}\}}. The error in all cases was higher than the error of the optimal label (from $0.8\%$ and up to $0.91\%$).

We observed similar results for the COMPAS dataset. The dark green bar shows the maximum error of the optimal label ($0.62\%$). In this case the optimal label was generated using a set of six attributes: {\small\texttt{RecSupervisionLevel}}, {\small\texttt{RecSupervisionLevelText}}, {\small\texttt{DisplayText}}, {\small\texttt{Scale\_ID}}, {\small\texttt{DecileScore}} and {\small\texttt{ScoreText}}. For
each label generated from a set obtained by removing a single attribute from the set used to generate the optimal label we obtained a label with an higher error rate (shown in light green bars) from $0.72\%$ and up to~$0.81\%$.

Finally, the optimal label for the Credit Card dataset was generated using the attribute set containing the attributes {\small\texttt{education}}, {\small\texttt{marriage}}, {\small\texttt{age}} and {\small\texttt{PAY\_AMT1}}, which describes the amount paid in September, 2005. The maximum error of the optimal label was $2.02\%$ (the dark purple bar). In three out of the four attributes subsets (light purple bars), the maximum error was higher than the optimal label (from $2.2\%$ to $2.34\%$).  The error of the label generated using only {\small\texttt{education}}, {\small\texttt{marriage}}, {\small\texttt{age}} was similar to the optimal error.

To conclude, the result of this experiment supports our claim and indicates that the assumption (that a more specific pattern count leads to lower error in the count estimate) underlying our optimized heuristic indeed holds in practice.

\begin{figure}
	\centering
	\includegraphics[width = 0.33\textwidth]{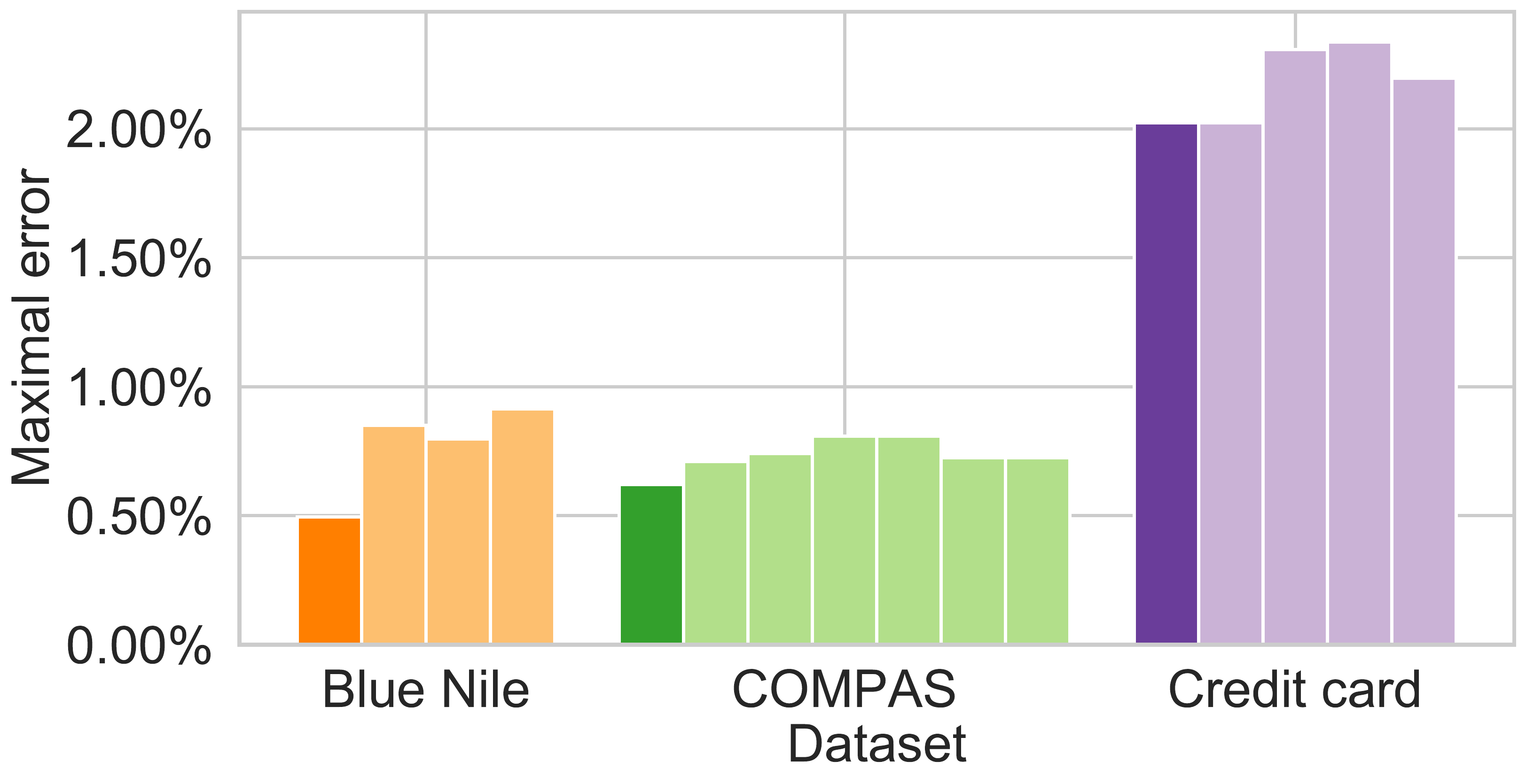}
	\caption{Optimal label vs. Sub labels error. For each data set, the dark bar indicates the performance with label bound set to 100.  The light bars represent the error of the labels generated from the attributes sets obtained by removing a single attribute from the optimal set.}
	\label{fig:sub_labels}
\end{figure}


%



\section{Related Work}\label{sec:related}


With increasing interest in data equity in recent years, multiple lines of work have focused on labeling data and models in order to improve transparency, accountability and fairness in data science \cite{StoyanovichH19, DNL, datasheets, MithraLabel, modelCards,NLforRanking}.



Different data labeling models were studied in \cite{DNL, datasheets, MithraLabel}.
Data nutrition labels \cite{DNL} are composed of modules, called widgets. Modules are stand-alone, and each provides a different flavor of information: metadata, provenance, variables, statistics pair, probabilistic model and ground truth correlations. The models vary in the manual effort required for their generation and their technical sophistication. Overall, the labels allow users to  interrogate various aspects of the dataset.
 Our proposed label model may  be assimilated as a widget or a module in the above models. An important feature of our model is the ability to automatically generate the labels. 

Other works focused on model labeling \cite{modelCards, NLforRanking}. 
The model cards defined in \cite{modelCards} is a framework that encourage transparent model reporting. The authors proposed  a standard way of reporting information regarding machine learning models, taking into consideration the context in which they are applied, and ethical aspects. The model cards include information about the model such as how it was built, what assumptions were made and its possible effect on different ``protected" groups. 
The work of \cite{NLforRanking} has focused on generating nutritional label for rankings. The ranking facts of \cite{NLforRanking} is a set of widgets that present different aspects of the ranking algorithm while addressing key principles of transparency and interpretability, such as attributes with significant impact on the outcome, the stability of the ranking, its fairness with respect to different fairness measures and diversity.

While the idea of a nutritional label has been very nicely argued for in work such as that cited above, the actual content of the label is either manually generated, or at most has an aspiration towards automated generation beyond the simplest properties.  Our work establishes the first critical widget that provides substantive information about a data set and is constructed in a completely automated manner. 

Data profiling is used not just for nutrition labels, but also for many other purposes.
Most notably, database systems have used such information for decades to assist in query optimization through query result size estimation. Histograms are commonly used on individual attributes~\cite{JagadishKMPSS98}. However, histograms on more than one attribute at a time are uncommon. 
The problem of query size estimation based on multi-dimensional histograms was studied in \cite{LeeKC99, DeshpandeGR01,ThaperGIK02,BrunoCG01}. These are not restricted to categorical datasets, but they work well only for low to medium data dimensionalities (typically 2-5 and at most 12, see~\cite{DeshpandeGR01}).
Our work in this paper can also be of value in building better multi-dimensional histograms.

\yuval{new}
There is a wealth line of work on selectivity estimation \cite{CormodeGHJ12, MullerMK18, DuttWNKNC19, YangLKWDCAHKS19} using various methods from sampling and synopses \cite{MullerMK18} to machine learning and deep learning that were suggested in recent works (e.g., \cite{DuttWNKNC19, YangLKWDCAHKS19}). Sampling methods are typically simple to implement but they are sensitive to skew and have insufficient performance for high selectivity queries, both crucial for the intended labels applications. As indicated by our experiments, using small samples (of same size as our labels) results in poor estimations. While machine learning based methods often do remarkably well, the resulting models are very complex and have a higher memory consumption than our proposed labels. Moreover, our proposed approach is derived from a user perspective, and designed to allow for human visualization and interpretation. The typically complexity of ML models makes them ill-suited for such purpose.

Our proposed label model may be reminiscent of the minimum description length (MDL) principle  \cite{Rissanen, MDL}, an important concept in information theory and computational learning theory. The MDL principle addresses the problem of  choosing the model that gives the shortest description of data. At a high level, the idea behind MDL is that the model that best captures or fits the important features of the data
is the one that is able to compress the data most. The basic idea is then to use two parts to describe the data: the hypothesis (or model) and an encoding of the data using that model. In a way, our problem may be considered as an MDL problem, where the label is the model and the set of errors with respect to each pattern as the additional information needed for the description of the data given the model.
An inherent difference between our work and the MDL principle, is that in our proposed model we aim at minimizing the error within a given bound limit, whereas in the  MDL principle the goal is to minimize the total description length (and not only the error).

The problem of reconstructing finer-scale data from multiple coarse views, aggregated over different (subsets of) dimensions was presented in \cite{PREMA}. The goal of the disaggregation task is to estimate a particular
series in a higher resolution, given observations in lower
resolution.
While our estimation technique relies on a single aggregated instance, that best estimates the original data, the work of \cite{PREMA} uses multiple aggregate data to reconstruct the original data.
 
There is a wealth of work on lossy data compression \cite{ZhangW96, YangK96}. Various techniques were proposed for different application such as image compassion \cite{AnsariMC98} and text compression \cite{WittenBMST94, TextCompressionBook}. While our proposed model of data labels may be considered as a new lossy  data compression method, our intended usage of the labels is different and as a result we do not consider the decoding process of the entire compressed data in bulk.

\section{Conclusion}
\label{sec:conc}

We have developed a ``label" for a data set that can be used to determine the count for every pattern of attribute value combinations in the data set. Since these counts are typically central to determining fitness for use, and thus avoid generating biased models and data-driven algorithms, our work is in line with the many recent proposals for a data set label that allows users to determine fitness for use and build trust.  
Our labels can be fully automatically generated. Our label model is built upon an estimation function, that allows to estimate the count of every pattern, using partial count information in the label. We present an optimized heuristic for optimal label generation, and experimentally show the quality of our label and usefulness of our heuristic compared with a naive algorithm.

Given the label of a found dataset, in case we observe an undesirable property of the data, such as insufficient diversity or groups with inadequate representation, the next step for a data scientist would be to determine whether the data can be adjusted in a way that will fit their chosen tasks. For instance, the work of \cite{AsudehJJ19} proposed an approach for coverage enhancement for patterns with inadequate representation through data acquisition. 

\bibliographystyle{plain}
\small{
\bibliography{bibtex}
}
\appendix


\subsection{Proof of Theorem \ref{prop:np-hard}}
\label{sec:reduction}

We prove Theorem \ref{prop:np-hard} via a reduction from the \emph{vertex cover} problem, a decision  problem which we now define.
For notational simplicity, and without loss of generality, we omit certain easy cases from the vertex cover problem. Namely, we require that the input graph contains at least two nodes and one edge and 
forbid self loops.

\begin{definition}[Vertex cover]
	Let $G=(V,E)$ be an undirected graph.
	A set $V' \subseteq V$  is a \emph{vertex cover} of $G$ if for every edge $\{x,y\}\in E$ 
	either $x \in V'$ or $y \in V'$.
\end{definition}

\begin{theorem}[\cite{Garey:1990}]\label{the:VC}
	Given an undirected graph $G=(V,E)$,
	where $V=\{v_1,\ldots,v_n\}$ for some $1<n$, 
	$E \neq \emptyset$, and  
	for any edge $\{v_i,v_j\}\in E$ it holds that $i \neq j$. 
	Determining if $G$ has a vertex cover $V'$ such that $|V'|\leq k$, when
	$k\in \{2,...,|V|-1\}$, is NP-hard. 
\end{theorem}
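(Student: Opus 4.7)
My plan is to prove NP-hardness by a polynomial-time reduction from Vertex Cover, as enabled by Theorem~\ref{the:VC}. Given a VC instance $(G=(V,E),k)$ with $|V|=n$ and $|E|=m$, I will construct an instance $(D,\mathcal{A},B_s,\mathcal{P},B_e)$ of the decision problem such that $G$ has a vertex cover of size at most $k$ iff the decision instance is a YES. The attribute set will be $\mathcal{A}=\{A_v : v\in V\}$ with each $A_v$ binary, and I will choose $B_s=2^k$ so that, when every attribute combination is realised by a sufficiently rich ``background'' block of $D$, the constraint $|P_S|\leq B_s$ is equivalent to $|S|\leq k$.

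For each edge $e=\{u,v\}\in E$, the database will contain a small ``edge gadget'' that perturbs the joint distribution of $(A_u,A_v)$ away from independence while leaving every other pair's joint behaviour essentially untouched. The pattern set $\mathcal{P}$ will contain one full-tuple witness $p_e$ per edge, tailored so that the estimation error $Err(L_S(D),p_e)$ is a known quantity $\Delta_e>0$ whenever neither $A_u$ nor $A_v$ lies in $S$, and is strictly smaller once $S$ ``covers'' the edge by containing at least one of $A_u,A_v$. Setting $B_e$ between these two regimes makes $Err(L_S(D),\mathcal{P})\leq B_e$ equivalent to ``$S$ covers every edge,'' i.e., to $\{v:A_v\in S\}$ being a vertex cover of $G$.

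Correctness will follow from two directions. Forward: given a cover $V^\ast$ with $|V^\ast|\leq k$, the set $S=\{A_v:v\in V^\ast\}$ satisfies $|P_S|=2^{|S|}\leq 2^k=B_s$, and every witness $p_e$ has at least one endpoint-attribute in $S$, so by construction $Err(L_S(D),p_e)\leq B_e$ for all $e$. Reverse: any feasible label must have $|S|\leq k$ (from the size bound), and it must cover every edge of $G$ (otherwise the corresponding witness pattern alone would push the max error above $B_e$), yielding a vertex cover of the required size. All counts and patterns are constructible in time polynomial in $n$, $m$, and $k$.

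The principal technical obstacle is designing the edge gadget so that a single endpoint-attribute in $S$ already suppresses the estimation error of $p_e$ below $B_e$: the paper's estimator treats attributes outside $S$ as mutually independent and independent of the attributes in $S$, so naively capturing a correlation between $A_u$ and $A_v$ seems to require both in $S$, not just one. To obtain the ``either endpoint suffices'' behaviour needed to mirror vertex cover, I will engineer the gadget so that, once any one of $A_u$ or $A_v$ is fixed by being included in $S$, the residual contribution of the gadget to $c_D(p_e)$ factors into its per-attribute marginals; a large, balanced background block keeps the global marginals at prescribed values and bounds inter-gadget spillover. Calibrating $\Delta_e$ and $B_e$ so that this spillover stays strictly below the threshold is the main calculation, and it is where most of the technical work of the proof will lie.
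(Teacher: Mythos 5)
There is a fundamental mismatch here: the statement you were asked to prove is Theorem~\ref{the:VC} itself, i.e., the classical NP-hardness of \emph{Vertex Cover} (restricted to graphs with at least two nodes, at least one edge, no self-loops, and $k\in\{2,\ldots,|V|-1\}$). The paper does not prove this; it cites it from Garey and Johnson, and a self-contained proof would be one of the standard Karp reductions (from 3-SAT, or from Clique/Independent Set via complementation). Your proposal instead \emph{assumes} Theorem~\ref{the:VC} as a premise and sketches a reduction in the opposite direction, from Vertex Cover to the paper's label decision problem --- that is the content of Theorem~\ref{prop:np-hard} and Proposition~\ref{prop:reduction}, not of the statement at hand. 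As a proof of Theorem~\ref{the:VC} the proposal is therefore circular and does not establish anything.

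Even read charitably as an attempt at Theorem~\ref{prop:np-hard}, the sketch leaves its central difficulty unresolved. You correctly identify the obstacle: under the paper's estimator, a correlation between $A_u$ and $A_v$ seems to require \emph{both} attributes in $S$ to be captured, whereas vertex cover needs ``either endpoint suffices.'' The paper's reduction resolves exactly this by introducing an auxiliary attribute $A_E$ whose domain indexes the edges and by routing each edge's anomaly through $A_E$: the witness pattern $\{A_E=x_r, A_i=x_1, A_j=x_1\}$ has zero error precisely when $A_E\in S$ and at least one of $A_i,A_j$ is in $S$ (Lemma~\ref{lem:err}), with a separate counting argument (Lemma~\ref{lem:size}) tying the label size to $|S|$. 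Your plan defers this gadget design to ``the main calculation,'' and your size bound $B_s=2^k$ with $|P_S|=2^{|S|}$ would not survive the kind of structured, partially-null tuples such a gadget requires. So the proposal both targets the wrong statement and, for the statement it implicitly targets, omits the key idea.
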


\paragraph*{Reduction} Given an input for the vertex cover problem, a graph $\mathcal{G} = (V, E)$ and $k$, we generated the following input to the optimal label problem: 
\begin{compactitem}
	\item A database $D$ with $|V| + 1$ attributes $\mathcal{A}$: $A_i$ for each $v_i\in V$ and an attribute $A_E$.
	\item For each attribute $A_i$ there are two possible values $x_1$ and $x_2$.
	\item The domain of the attribute $A_E$ contains $|E|$ possible values $x_i$ for each $e_i\in E$.
	\item For each edge $e_r = \{v_i,v_j\}\in E$ there are $4\cdot|E|$ tuples in the database $D$, each containing only values for the attributes $A_E$, $A_i$ and $A_j$ (and the rest are missing values): for each $p,q\in \{1,2\}$, $|E|$ tuples such that $A_i = x_p$, $A_j = x_q$ and $A_E=x_r$
	\item For each $v_i,v_j\in V$ such that $i\neq j$:
	\begin{compactenum}
	    \item if $\{v_i,v_j\}\not\in E$ there are $4\cdot|E|$ tuples in the database $D$: for each $p,q\in \{1,2\}$, $|E|$ tuples such that $A_i = x_p$, and $A_j = x_q$.
	    \item if $\{v_i,v_j\}\in E$ there are $4\cdot|E|^2$ tuples in the database $D$: for each $p\in \{1,2\}$, $2\cdot|E|^2$ tuples such that $A_i = x_p$, and $A_j = x_p$.
	\end{compactenum}
	\item $B_s = 2\cdot|E| + 4\cdot \sum_{i=1}^{k-1}i$
	\item The set $\mathcal{P}$ consist of $|E|$ patterns: a pattern $p = \{A_i = x_1, A_j = x_1, A_E = x_r\}$ for each edge $e_r=\{v_i,v_j\}\in~E$.
	\item $B_e = 0$
\end{compactitem} 

\begin{example}
	Given the graph shown in Figure \ref{fig:reduction_graph} the reduction's output contains the database depicted in Figure~\ref{fig:reduction_db}. The tuples in the top left-hand side of the Figure correspond to the edge $e_1=\{v_1,v_2\}\in E$ and the tuples in the top right-hand side to the edge $e_2=\{v_2,v_3\}\in E$. The cont attribute represent the number of occurrences of each tuple in the database. The tuples in the bottom part are added because there is no edge between $v_1$ and  $v_3$. 
	The set $\mathcal{P}$ in this example contains the patterns $\{A_E = x_1, A_1 = x_1, A_2 =x_1 \}$ and $\{A_E = x_2, A_2 = x_1, A_3 = x_1\}$.
\end{example}

\begin{figure}
	\centering
	\includegraphics[scale=1.2]{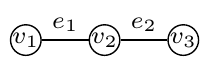}	
	\caption{Reduction input graph example}
	\label{fig:reduction_graph}
\end{figure}

\begin{figure}
	\centering
	\scriptsize
	\begin{tabular}{cc}
		\begin{tabular}{ccccc}
			\hline
			$A_E$& $A_1$     & $A_2$     & $A_3$ & count           \\  \hline \hline
			$x_1$&$x_1$ & $x_1$ &  & $2$    \\
			$x_1$&$x_1$ & $x_2$ &  & $2$   \\
			$x_1$&$x_2$ & $x_1$ &  & $2$  \\
			$x_1$&$x_2$ & $x_2$ &  & $2$ \\
			&$x_1$ & $x_1$ &  & $8$  \\
			&$x_2$ & $x_2$ &   & $8$  \\
			\hline
		\end{tabular}
		&
	\begin{tabular}{ccccc}
		\hline
		$A_E$& $A_1$     & $A_2$     & $A_3$ & count           \\  \hline \hline
		$x_2$&&$x_1$ & $x_1$ &   $2$    \\
		$x_2$&&$x_1$ & $x_2$ &   $2$   \\
		$x_2$&&$x_2$ & $x_1$ &   $2$   \\
		$x_2$&&$x_2$ & $x_2$ &   $2$  \\
		&&$x_1$ & $x_1$ &   $8$  \\
		&&$x_2$ & $x_2$ &   $8$  \\
		\hline
	\end{tabular}
	\end{tabular}
\bigskip

	\begin{tabular}{ccccc}
	\hline
	$A_E$& $A_1$     & $A_2$     & $A_3$ & count           \\  \hline \hline
	&$x_1$ && $x_1$ &   $2$    \\
	&$x_1$ && $x_2$ &   $2$   \\
	&$x_2$ && $x_1$ &   $2$    \\
	&$x_2$ && $x_2$ &   $2$   \\
	\hline
\end{tabular}
	\caption{Reduction example's output database}
	\label{fig:reduction_db}
\end{figure}

%
To prove the reduction correctness we show that there is vertex cover of size $k$ in a graph if and only if there is a label of size at most $2\cdot|E| + 4\cdot \sum_{i=1}^{k-1}i$ with error $0$.
\begin{proposition}\label{prop:reduction}
	Given an input for the vertex cover problem, a graph $\mathcal{G} = (V, E)$ and $k$, let $D$ be the database resulting from the reduction and $\mathcal{P}$ the set of patterns.  There exists  $S\subseteq\mathcal{A}$  an attributes subset with $|L_S(D)|\leq  2\cdot|E| + 4\cdot \sum_{i=1}^{k-1}i$ such that  $Err(L_S(D), \mathcal{P}) = 0 \iff$  
	there exists  a vertex cover of size $k$ in $\mathcal{G}$.
\end{proposition}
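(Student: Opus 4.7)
I would prove the iff in two directions, establishing a correspondence between vertex covers of $\mathcal{G}$ and attribute subsets $S$ for which $L_S(D)$ achieves zero error on the pattern set $\mathcal{P}$ built by the reduction.

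For the forward direction ($\Leftarrow$), start from a vertex cover $V' \subseteq V$ with $|V'| = k$ and take $S = \{A_i : v_i \in V'\}$. First verify the zero-error property: for each edge $e_r = \{v_i,v_j\}$ with corresponding pattern $p_r = \{A_i = x_1, A_j = x_1, A_E = x_r\}\in \mathcal{P}$, the cover property forces $v_i \in V'$ or $v_j\in V'$, so at least one of $A_i, A_j$ lies in $S$; together with the $VC$ information and the symmetric 50/50 value distribution the construction imposes, a direct computation shows $Est(p_r,L_S(D)) = c_D(p_r) = |E|$. Next verify the size bound by enumerating $P_S$: a pattern $q\in P_S$ requires a tuple that has every $S$-attribute set, so contributions come only from (i) the edge tuples (accounting for at most $2|E|$ patterns) and (ii) the edge-pair tuples between two vertices of $V'$ (each such pair contributing $4$ patterns, so $4\binom{k}{2} = 2k(k-1)$ in total), yielding $|P_S| \leq 2|E| + 2k(k-1) = B_s$.

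For the reverse direction ($\Rightarrow$), given a label $L_S(D)$ with $|P_S|\leq B_s$ and $Err(L_S(D),\mathcal{P})=0$, define $V' = \{v_i : A_i \in S\}$. To show $V'$ is a cover, fix an edge $e_r = \{v_i,v_j\}$ and suppose toward contradiction that $A_i, A_j\notin S$; by a short case analysis on whether $A_E\in S$ and an explicit evaluation of $Est(p_r,L_S(D))$, the correlation between $A_i$ and $A_j$ created by the $4|E|^2$ same-value edge-pair tuples cannot be captured from the marginals in $VC$ alone, so the estimate strictly differs from $|E|$, contradicting the zero-error assumption. To bound $|V'|\leq k$, observe that every pair of vertices in $V'$ that form an edge in $\mathcal{G}$ induces $4$ distinct nonzero-count patterns in $P_S$ (from the edge-pair tuples), so $|P_S|$ admits a lower bound of the form $2|E| + 4\binom{|V'|}{2}$; combined with $|P_S|\leq B_s = 2|E|+2k(k-1)$ this gives $|V'|\leq k$.

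The principal obstacle is the case analysis in the reverse direction: one must explicitly evaluate $Est(p_r,L_S(D))$ for each configuration of $S\cap\{A_E,A_i,A_j\}$ and show that omitting both $A_i$ and $A_j$ from $S$ strictly precludes zero error, which relies on the numerical asymmetry the reduction introduces through the $4|E|^2$ same-value edge-pair tuples (without which marginal estimates would be exact by symmetry). A secondary challenge is the careful combinatorial bookkeeping that pins $|P_S|$ down to exactly match $B_s = 2|E| + 2k(k-1)$ in both directions of the equivalence.
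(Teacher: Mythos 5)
Your overall architecture mirrors the paper's: an error-characterization step (zero error on the pattern for edge $e_r$ forces $S$ to hit $\{A_i,A_j\}$, cf.\ Lemma~\ref{lem:err}) plus an exact size-accounting step ($2|E|$ plus $4\binom{k}{2}$, cf.\ Lemma~\ref{lem:size}), combined in both directions. But there is a concrete gap: in the forward direction you take $S=\{A_i : v_i\in V'\}$, omitting the edge attribute $A_E$, and with that choice the zero-error claim is false. Every pattern in $\mathcal{P}$ constrains $A_E=x_r$, so if $A_E\notin S$ the estimator must multiply $c_D(p|_S)$ by the marginal $c_D(\{A_E=x_r\})/\sum_j c_D(\{A_E=x_j\})=1/|E|$, and $c_D(p|_S)$ is inflated by the $A_E$-free vertex-pair tuples. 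Concretely, if both $A_i,A_j\in S$ then $c_D(\{A_i=x_1,A_j=x_1\})=|E|+2|E|^2$, giving an estimate of $2|E|+1\neq|E|$ (this is exactly the second case in the proof of Lemma~\ref{lem:err}); if only one endpoint attribute is in $S$ the estimate is $(|V|-1)+|E|\cdot\deg(v_i)>|E|$. The paper instead sets $S=\{A_E,A_{i_1},\dots,A_{i_k}\}$, and the $2|E|$ term of $B_s$ is precisely the cost of the patterns $\{A_E=x_r,A_i=x_p\}$, i.e.\ of carrying $A_E$; your own bookkeeping, which charges $2|E|$ for ``edge patterns,'' is only consistent with $A_E\in S$.

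The same omission undermines the reverse direction: you never establish that $A_E\in S$, yet you rely on it both for the size lower bound $|P_S|\ge 2|E|+4\binom{|V'|}{2}$ and for the error argument. Moreover, the mechanism you invoke is not the operative one: all single-attribute marginals are exactly $50/50$ by construction, so the nonzero error does not come from failing to ``capture the correlation between $A_i$ and $A_j$ from the marginals''; it comes from pushing the $1/|E|$ marginal of $A_E$ through counts that include the $A_E$-free pair tuples. Your case analysis would also have to confront the sub-case $A_E\in S$, $A_i,A_j\notin S$, where a direct computation gives $Est(p_r,L_S(D))=c_D(\{A_E=x_r\})\cdot\tfrac12\cdot\tfrac12=|E|$, i.e.\ an exact estimate, so the contradiction you are aiming for does not materialize there (this sub-case is likewise not among those treated in the proof of Lemma~\ref{lem:err}, and any complete argument must address it explicitly). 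To repair the proof: include $A_E$ in $S$ in the forward direction so that $|S|=k+1$ and Lemma~\ref{lem:size} yields $|L_S(D)|=2|E|+4\sum_{i=1}^{k-1}i$ exactly, and in the reverse direction first prove the full characterization of which sets $S$ achieve zero error (forcing $A_E\in S$ and one endpoint attribute per edge) before reading off the vertex cover and deducing $|V'|\le k$ from the size formula.
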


To prove Proposition \ref{prop:reduction} we show that (i) the error of a label generated using a subset of attributes $S$ is $0$ is and only if the corresponding set of nodes in the graph are vertex cover using Lemma \ref{lem:err} and (ii) prove the size bounds using Lemma \ref{lem:size}.

\begin{lemma} \label{lem:err} Let $e_r = \{v_i, v_j\}\in E$,  $p=\{A_E = x_r, A_i = x_1, A_j=x_1\}$ be a pattern in $\mathcal{P}$ and   $S\subseteq\mathcal{A}$ be an attributes subset. $Err(L_S(D)) = 0 \iff A_E\in S$ and at least one of $A_i$ or $A_j$ in $S$.
\end{lemma}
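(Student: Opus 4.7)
The plan is to prove each direction of the biconditional by computing $Est(p, L_S(D))$ directly from the construction and checking whether it matches $c_D(p) = |E|$. I will first collect the counts I need throughout: $c_D(p) = |E|$, coming from the single matching group of $T_{e_r}$ (the tuples produced by the edge rule for $e_r$); $c_D(\{A_E = x_r, A_i = x_1\}) = 2|E|$ and $c_D(\{A_E = x_r\}) = 4|E|$, since $T_{e_r}$ is the only rule that ever sets $A_E$; and the symmetric ratios $\tfrac{c_D(\{A_j = x_1\})}{\sum_a c_D(\{A_j = a\})} = \tfrac{1}{2}$ and $\tfrac{c_D(\{A_E = x_r\})}{\sum_a c_D(\{A_E = a\})} = \tfrac{4|E|}{4|E|^2} = \tfrac{1}{|E|}$, which follow from the $x_1 \leftrightarrow x_2$ symmetry of every rule and the uniform distribution over edges inside $T_e$.

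For the forward direction $(\Leftarrow)$, assume $A_E \in S$ and, without loss of generality, $A_i \in S$. If $A_j \in S$ too, then $p|_{S\cap Attr(p)} = p$ and the estimate is trivially exact. Otherwise the estimation formula collapses to $Est(p, L_S(D)) = c_D(\{A_E = x_r, A_i = x_1\}) \cdot \tfrac{1}{2} = 2|E|\cdot\tfrac{1}{2} = |E| = c_D(p)$, using the counts recorded above. The symmetric case with $A_j$ in place of $A_i$ is identical.

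For the reverse direction $(\Rightarrow)$, I proceed by case analysis on whether $A_E \in S$ and on which of $A_i, A_j$ lie in $S$. When $A_E \notin S$, the product picks up the factor $\tfrac{1}{|E|}$; for each sub-case I compute $c_D(p|_{S\cap Attr(p)})$ from the contributions of the edge rule, the non-edge-pair rule, and the edge-pair rule (which, for $\{v_i,v_j\}\in E$, is what forces $c_D(\{A_i = x_1, A_j = x_1\}) = |E| + 2|E|^2$), and show that the resulting product cannot equal $|E|$. The complementary case, $A_E \in S$ with $A_i, A_j \notin S$, is where the edge-pair rule plays its essential role: it creates a correlation between $A_i$ and $A_j$ that the independence-style factors applied to the marginals of $A_i$ and $A_j$ cannot reproduce, so the resulting estimate is forced off of $|E|$. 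The main obstacle is precisely this last case — verifying that the degree-dependent marginals $c_D(\{A_i = x_1\})$ and $c_D(\{A_j = x_1\})$ cannot conspire with $c_D(\{A_E = x_r\})$ to yield exactly $|E|$ — after which the remaining sub-cases reduce to analogous algebra on the same three rule classes.
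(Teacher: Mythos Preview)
Your overall strategy --- a case split on $S \cap \{A_E, A_i, A_j\}$, using the counts $c_D(p)=|E|$, $c_D(\{A_E=x_r\})=4|E|$, and the symmetric marginal ratios $\tfrac12$ and $\tfrac{1}{|E|}$ --- is the same as the paper's, and your $(\Leftarrow)$ direction is correct and coincides with the paper's first case.

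There is, however, a genuine gap in your $(\Rightarrow)$ direction, precisely at the case you flag as the ``main obstacle'': $A_E\in S$ with $A_i,A_j\notin S$. You assert that the edge-pair correlation forces the estimate off $|E|$, invoking ``degree-dependent marginals'' $c_D(\{A_i=x_1\})$. But the estimation formula does not use those raw counts; it uses the \emph{ratios} $\tfrac{c_D(\{A_i=x_1\})}{\sum_a c_D(\{A_i=a\})}$, which you yourself already computed to be exactly $\tfrac12$ by the $x_1\leftrightarrow x_2$ symmetry of every rule in the construction. Plugging in your own numbers gives
\[
Est(p,L_S(D)) \;=\; c_D(\{A_E=x_r\})\cdot\tfrac12\cdot\tfrac12 \;=\; 4|E|\cdot\tfrac14 \;=\; |E| \;=\; c_D(p),
\]
so the error is $0$. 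Thus $S=\{A_E\}$ already achieves zero error on every pattern in $\mathcal P$, contradicting the direction you are trying to prove. The ``obstacle'' you identify is not merely hard; under the stated construction it is a counterexample to the lemma.

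For comparison, the paper's proof also does not handle this case: its ``Otherwise'' clause computes with base $|D|$ and the three factors $\tfrac{1}{|E|},\tfrac12,\tfrac12$, which is only the correct restriction when $S\cap\{A_E,A_i,A_j\}=\emptyset$, not when $S\cap\{A_E,A_i,A_j\}=\{A_E\}$. So the defect lies in the lemma/construction rather than in your proof technique; as written, the reduction would need an asymmetry between $x_1$ and $x_2$ among the $A_E=x_r$ tuples (or some analogous tweak) before this case can be pushed through.
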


\begin{proof}
	Let $e_r = \{v_i, v_j\}\in E$, $p=\{A_E = x_r, A_i = x_1, A_j=x_1\}$  a pattern in $\mathcal{P}$ and   $S\subseteq\mathcal{A}$ be an attributes subset. Note that:
	\begin{compactitem}
		\item $c_D(p) = |E|$
		\item  For each attribute $A_p$:
		$$\frac{c_D(\{A_p = x_1\})}{c_D({A_p = x_1})+c_D({A_p = x_2})} = \frac{1}{2}$$
		\item For each $x_r\in Dom(A_E)$:
		$$\frac{c_D(\{A_E = x_r\})}{\sum_{j = 1}^{|E|} c_D(\{A_E = x_j\})} = \frac{4\cdot|E|}{\sum_{j = 1}^{|E|} 4\cdot|E|} = \frac{1}{|E|}$$
		\item $|D| = 4\cdot|E|^2+4\cdot|E|^3 + 4\cdot|E|\cdot(|V|^2-|E|)$
	\end{compactitem}
	
We consider all possible cases as follows.
	\begin{compactitem}
		\item Without loss of generality assume that $A_E\in S$ and $A_i\in S$ then $$c_D(p|_{S}) = c_D(\{A_E = x_r, A_i = x_1\}) = 2\cdot|E|$$ and thus $$Est(p, L_S(D)) =2\cdot|E|\cdot\frac{1}{2} = |E|$$ Namely, the error in this case is $0$.
		\item If $A_i\in S$ and $A_j\in S$ but $A_E\not\in S$ we get $c_D(p|_{S}) = c_D(\{A_i= x_1, A_j = x_1\}) = |E|+2\cdot|E|^2$ and thus $$Est(p, L_S(D)) = |E|+2\cdot|E|^2\cdot\frac{1}{|E|}  = 2|E|+1$$ Namely, the error in this case is $|E|+1>0$.
		\item Otherwise we get  
		\begin{multline*}
		    Est(p, L_S(D)) = |D|\cdot \frac{1}{|E|}\cdot\frac{1}{2}\cdot\frac{1}{2} =\\ (4\cdot|E|^2+4\cdot|E|^3 + 4\cdot|E|\cdot(|V|^2-|E|)\cdot\frac{1}{4\cdot|E|}  =\\ |E|^2 + |V|^2 > |E|
		\end{multline*}
		Thus, the error in this case is greater than $0$.
	\end{compactitem}
\end{proof}

\begin{corollary}\label{cor:err}
	Let $\mathcal{P}$ be the patterns set generated by the reduction and 	let  $S\subseteq\mathcal{A}$ be an attributes subset
	\begin{multline*}
		Err(L_S(D), \mathcal{P}) = 0 \iff A_E\in S \text{ and }\\ \text{ at least one of } A_i \text{ or } A_j \text{ in } S \text{ for each }\{v_i,v_j\}\in E
	\end{multline*}
\end{corollary}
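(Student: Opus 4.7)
The plan is to derive the corollary directly from Lemma~\ref{lem:err} together with the definition of $Err(l, \mathcal{P})$ as the maximum error over patterns in $\mathcal{P}$. Since a maximum of non-negative quantities is zero if and only if each individual quantity is zero, I first observe that $Err(L_S(D), \mathcal{P}) = 0$ is equivalent to $Err(L_S(D), p) = 0$ for every $p \in \mathcal{P}$.

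Next I invoke the construction of $\mathcal{P}$ from the reduction: $\mathcal{P}$ contains exactly one pattern $p_r = \{A_E = x_r, A_i = x_1, A_j = x_1\}$ for each edge $e_r = \{v_i, v_j\} \in E$. Applying Lemma~\ref{lem:err} to each such pattern, $Err(L_S(D), p_r) = 0$ holds if and only if $A_E \in S$ and at least one of $A_i, A_j$ lies in $S$. Taking the conjunction over all $e_r \in E$, the requirement ``$A_E \in S$'' is shared by every clause and collapses into a single condition, while the endpoint condition must be enforced separately for every edge. This yields exactly the statement of the corollary.

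The main (and only) subtlety to watch for is ensuring the quantifiers are handled correctly: the universal ``for every $p \in \mathcal{P}$'' is pushed through the definition of $Err(\cdot, \mathcal{P})$, and then rewritten as ``for every edge $\{v_i, v_j\} \in E$'' using the bijection between $\mathcal{P}$ and $E$ established by the reduction. Beyond that, there is no additional work, and no further obstacle beyond what was already overcome in Lemma~\ref{lem:err}.
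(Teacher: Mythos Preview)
Your proposal is correct and matches the paper's approach: the paper states Corollary~\ref{cor:err} as an immediate consequence of Lemma~\ref{lem:err} without giving an explicit proof, and the derivation you spell out---zero maximum error over $\mathcal{P}$ iff zero error on each pattern, then apply Lemma~\ref{lem:err} edgewise via the bijection between $\mathcal{P}$ and $E$---is exactly the intended (and essentially only) argument.
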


\begin{lemma}\label{lem:size}
	Let $S\subseteq\mathcal{A}$ be an attributes subset of size $|S| = k+1$ for $k\geq 1$ such that $A_E\in S$ then $|L_S(D)| = 2\cdot|E'| + 4\cdot \sum_{i=1}^{k-1}i$, where $E' = \{e_r = \{v_i,v_j\}\mid A_i\in S \text{ or } A_j\in S \text{ (or both)}\}$.
\end{lemma}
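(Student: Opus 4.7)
The plan is to prove Lemma \ref{lem:size} by a direct enumeration of $P_S$, the set of patterns $p$ with $Attr(p) = S$ and $c_D(p) > 0$. Writing $V_S = \{v_i : A_i \in S\}$, we have $|V_S| = k$ and $S = \{A_E\} \cup \{A_i : v_i \in V_S\}$, so every pattern in $P_S$ is uniquely determined by its value at $A_E$ together with its values at the $k$ vertex-attributes in $V_S$. I would partition $P_S$ according to two dichotomies: (i) whether the $A_E$-coordinate is an edge-label $x_r$ or the ``missing'' slot, and (ii) for each vertex-pair inside $V_S$, whether it is an edge of $\mathcal{G}$. The counts of the resulting cells will add up to the claimed formula.

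The argument runs in two phases. For patterns with $A_E = x_r$, the only supporting tuples in $D$ are the edge-tuples of $e_r$, which specify $A_E, A_a, A_b$ for $e_r = \{v_a, v_b\}$. Splitting $E'$ into $E'_1 = \{e \in E' : |e \cap V_S| = 1\}$ and $E'_2 = \{e \in E' : |e \cap V_S| = 2\}$ and projecting onto $S$, an edge in $E'_1$ contributes exactly $2$ distinct patterns (one per value of its unique endpoint in $V_S$, with the other coordinates of $S_V$ forced to the missing slot), while an edge in $E'_2$ contributes $4$ distinct patterns (one per combination of the two endpoint values); edges outside $E'$ contribute nothing. For patterns with $A_E$ set to the missing slot, the support comes from pair-tuples; for each unordered pair $\{v_a, v_b\} \subseteq V_S$, the associated pair-tuples realize all $4$ value-combinations of $(A_a, A_b)$ when $\{v_a, v_b\} \notin E$ and only the $2$ diagonal combinations when $\{v_a, v_b\} \in E$. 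Summing over both phases gives
$$
|P_S| = 2|E'_1| + 4|E'_2| + 4\Bigl(\binom{k}{2} - |E'_2|\Bigr) + 2|E'_2| = 2|E'| + 4\binom{k}{2},
$$
which matches the statement of the lemma via $\binom{k}{2} = \sum_{i=1}^{k-1} i$.

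The hard part will be the bookkeeping needed to verify that no pattern is double-counted across the two phases or within either phase. Disjointness of the two phases is immediate: edge-supported patterns have $A_E \in \{x_1, \dots, x_{|E|}\}$ while pair-supported patterns have $A_E$ in the missing slot, so the $A_E$-coordinate separates them. Within each phase, distinct edges (respectively, distinct vertex-pairs) pin different coordinates of $S_V$ to non-missing values, so their projections onto $S$ are distinguishable by the support of their non-missing coordinates. The two case distinctions in the pair-contribution ($4$ vs.\ $2$) and in the edge-contribution ($4$ vs.\ $2$) conveniently cancel in the sum, yielding the clean closed-form $2|E'| + 4\binom{k}{2}$; combining this with the identity $\sum_{i=1}^{k-1} i = \binom{k}{2}$ completes the proof.
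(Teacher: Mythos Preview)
Your direct enumeration is a genuinely different route from the paper, which proceeds by induction on $k$ (tracking how many patterns are gained when one additional vertex-attribute is appended to $S$). A clean direct count would in principle be preferable to the induction.

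However, the partition you describe is not a partition of $P_S$; it covers only a proper subset, so your displayed sum is not actually $|P_S|$. You explicitly allow ``missing'' as a coordinate value (you let $A_E$ sit in ``the missing slot'' and you let vertex-coordinates be ``forced to the missing slot''), yet you then drop two families of patterns that this convention produces. First, in Phase~1 you assert that ``edges outside $E'$ contribute nothing,'' but for $e_r=\{v_a,v_b\}$ with $v_a,v_b\notin V_S$ the edge-tuples project on $S$ to the single pattern $\{A_E=x_r\}\cup\{A_i=\text{missing}:v_i\in V_S\}$, one extra pattern per edge of $E\setminus E'$. Second, in Phase~2 you enumerate only pairs $\{v_a,v_b\}\subseteq V_S$; pair-tuples for $\{v_a,v_b\}$ with $v_a\in V_S$, $v_b\notin V_S$ project to $\{A_E=\text{missing},\,A_a=x_p,\,\text{others missing}\}$ (two further patterns per vertex of $V_S$ whenever $|V|>k$), and pairs wholly outside $V_S$ give the all-missing pattern. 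Your ``distinguishability by the support of non-missing coordinates'' argument is valid only for pairs contained in $V_S$; once the straddling and external pairs are included, distinct pairs can collapse to the same projected pattern, so the bookkeeping you outline does not close the count. It is precisely by omitting these extra terms that your total happens to land on $2|E'|+4\binom{k}{2}$.
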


\begin{proof}
	The proof by induction on $k$. 
	\begin{compactdesc}
		\item [Base] If $k = 1$ then $S = \{A_E, A_i\}$. Let $E' = \{ \{v_i,v_j\}\in E\mid \forall v_j \in V\}$, by the reduction construction $D$ (and thus also $L_S(D)$) contains the patterns $\{A_E = x_r, A_i = x_1\}$ and $\{A_E = x_r, A_i = x_2\}$ for each $e_r\in E'$,  thus $|L_S(D)| = 2\cdot|E'|$ and the proposition holds.
		\item [Inductive step] Assuming the proposition holds for $k > 1$. Let $S = \{A_E, A_{i_1},\ldots, A_{i_{(k+1)}}\}$, $S' = \{A_E, A_{i_1},\ldots, A_{i_{k}}\}$, $E' = \{e_r = \{v_i,v_j\}\mid A_{i}\in S \text{ or } A_j\in S \text{ (or both)}\}$ and $E'' = \{e_r = \{v_i,v_j\}\mid A_{i}\in S' \text{ or } A_j\in S' \text{ (or both)}\}$. From the induction hypothesis $|L_{S'}(D)| = 2\cdot|E''| + 4\cdot \sum_{i=1}^{k-1}i$.
		Adding the attribute $A_{i_{(k+1)}}$ to the label increase the number of patterns by $4$ for each $A_{i_q}\in S'$:
		\begin{compactitem}
			\item If $e_r = \{v_{i_q}, v_{i_{(k+1)}}\}\in E$ then instead of $2$ patterns $\{A_E = x_r, A_{i_q} = x_p\}$ for each $p\in\{1,2\}$ we have $6$ patterns in $L_S(D)$: $\{A_E = x_r, A_{i_q} = x_p, A_{i_{(k+1)}} = x_m\}$, for each $p, m\in\{1,2\}$ (4 patterns), $\{A_{i_q} = x_1 , A_{i_{(k+1)}} =x_1\}$ and $\{A_i = x_2,\\ A_{i_{(k+1)}} =x_2\}$.
			\item If $\{v_{i_q}, v_{i_{(k+1)}}\}\not\in E$, the patterns $\{A_{i_q} = x_p,\\ A_{i_{(k+1)}} = x_p\}$ for each $p\in\{1,2\}$ are in $L_S(D)$ (and not in $L_S'(D)$).
		\end{compactitem}
		
	In addition, for every $e_r\in E'\setminus E''$, $L_S(D)$ contains $2$ additional patterns: $\{A_E = x_r, A_{i_{(k+1)}} = x_1\}$ and $\{A_E = x_r, A_{i_{(k+1)}} = x_2\}$. Namely
	\begin{multline*}
		|L_S(D)| = |L_{S'}(D)| + 2\cdot |E'\setminus E''| + 4\cdot k = \\ 2\cdot|E''| + 4\cdot \sum_{i=1}^{k-1}i  + 2\cdot |E'\setminus E''| + 4\cdot k= \\2\cdot |E'| + 4\cdot \sum_{i=1}^{k}i
	\end{multline*}
	\end{compactdesc}
\end{proof}

%
%

\begin{proof}{(Proposition \ref{prop:reduction})}
	Given an input for the vertex cover problem, a graph $\mathcal{G} = (V, E)$ and $k$, let $D$ be the database resulting from the reduction and $\mathcal{P}$ the set of patterns.
	Assume that there exists a vertex cover of size $k$ in $\mathcal{G}$, $V'=\{v_{i_1},\ldots, v_{i_1}\}\subseteq V$. Let $S = \{A_E, A_{i_1}, \ldots, A_{i_k}\}\subseteq\mathcal{A}$ a subset of attributes. Since $V'$ is a set cover, for every edge $\{x,y\}\in E$ 
	either $x \in V'$ or $y \in V'$. Thus, from Corollary \ref{cor:err} the error of $Err(L_S(D), \mathcal{P})$ is $0$. Moreover, from Lemma \ref{lem:size}, the size of $L_S(D)$ is $2\cdot|E| + 4\cdot \sum_{i=1}^{k-1}i$.
	
	Assume that there exists a subset of attributes \\$S = \{A_E, A_{i_1},\ldots, A_{i_m}\}$ such that $|L_S(D)|\leq  2\cdot|E| + 4\cdot \sum_{i=1}^{k-1}i$ and  $Err(L_S(D), \mathcal{P}) = 0$. From  Corollary \ref{cor:err}, $A_E\in S$. Let $V' = \{ v_{i_1},\ldots, v_{i_m}\}\subseteq V$. We show that $V'$ is a vertex cover of size at most $k$. From Corollary \ref{cor:err} $V'$ is a vertex cover. Assume by contradiction that $m>k$, then from Lemma \ref{lem:size} the size of $L_S(D)$ is $2\cdot|E| + 4\cdot\sum_{i=1}^{m-1} > 2\cdot|E| + 4\cdot\sum_{i=1}^{k-1}$. Therefore $V'$ is a vertex cover and $|V'|\leq k$.
\end{proof}

The proof of Theorem \ref{prop:np-hard} follows immediately from Proposition \ref{prop:reduction} and Theorem \ref{the:VC}.

\subsection{Proof of Proposition \ref{prop:subset_acc}}

\begin{proof}
	Let $D$ be a database with attributes $\mathcal{A}$,  $S_1\subseteq S_2\subseteq\mathcal{A}$ two attribute sets, $l_i=L_{S_i}(D)$ the labels of $D$ using $S_i$ for $i=1,2$ respectively	
	and $p$ a pattern 
	such that $Attr(p)\not\subseteq S_2$. 
	Denoting $p' = p|_{Attr(p)\cap S_2}$, the pattern resulting when restricting $p$ to include only the attributes appearing in $S_2$, the estimate of $p'$ using $l_1$ is 
	\begin{multline*}
	Est(p',l_1) = \\c_D(p'|_{S_1})\cdot\prod_{A_i\in Attr(p')\setminus S_1}\frac{c_D(\{A_i = p'.A_i\})}{\sum_{a_j\in Dom(A_i)} c_D(\{A_i = a_j\})}
	\end{multline*}
	and the estimate of $p$ using $l_2$ is 
	\begin{multline*}
	Est(p,l_2) = \\c_D(p|_{S_2})\cdot\prod_{A_i\in Attr(p)\setminus S_2}\frac{c_D(\{A_i = p.A_i\})}{\sum_{a_j\in Dom(A_i)} c_D(\{A_i = a_j\})}
	\end{multline*}
	Without loss of generality, assume that  $Est(p',l_1) < c_D(p')$, namely, the estimate of $p'$ using $l_1$ is an under estimation. Lets further assume that the estimate of $p$ using $l_2$ is an under estimation, thus $Est(p,l_2) < c_D(p)$. Note that $p'|_{S_1} = p|_{S_1}$ and $p'|_{S_2} = p|_{S_2}$, thus $c_D(p'|_{S_1}) = c_D(p|_{S_1})$ and $c_D(p'|_{S_2}) = c_D(p|_{S_2})$. Moreover, since $Attr(p')\subseteq S_2$, $p'|_{S_2} = p'$ and $c_D(p') = c_D(p'|_{S_2})$  We have that 
	\begin{multline*}
	Est(p,l_1) = \\ c_D(p|_{S_1})\prod_{A_i\in Attr(p)\setminus S_1}\frac{c_D(\{A_i = p.A_i\})}{\sum_{a_j\in Dom(A_i)} c_D(\{A_i = a_j\})} = \\ c_D(p'|_{S_1})\prod_{A_i\in S_2\setminus S_1}\frac{c_D(\{A_i = p.A_i\})}{\sum_{a_j\in Dom(A_ i)} c_D(\{A_i = a_j\})} \cdot  \\
	\prod_{A_i\in Attr(p)\setminus S_2}\frac{c_D(\{A_i = p.A_i\})}{\sum_{a_j\in Dom(A_i)} c_D(\{A_i = a_j\})} =\\
	Est(p',l_1)\cdot \prod_{A_i\in Attr(p)\setminus S_2}\frac{c_D(\{A_i = p.A_i\})}{\sum_{a_j\in Dom(A_i)} c_D(\{A_i = a_j\})} 
	<\\  c_D(p')\prod_{A_i\in Attr(p)\setminus S_2}\frac{c_D(\{A_i = p.A_i\})}{\sum_{a_j\in Dom(A_i)} c_D(\{A_i = a_j\})} =\\ c_D(p|_{S_2})\prod_{A_i\in Attr(p)\setminus S_2}\frac{c_D(\{A_i = p.A_i\})}{\sum_{a_j\in Dom(A_i)} c_D(\{A_i = a_j\})} =\\ Est(p, l_2)
	\end{multline*}
	Since $Est(p, l_2) <  c_D(p)$ and $Est(p,l_1) < Est(p, l_2)$ we get $Err(l_1,p) > Err(l_2,p)$.
\end{proof}

\end{document}